\newtheorem*{property*}{Property}
\newtheorem*{lemma*}{Lemma}
\newtheorem*{precond*}{Precondition}
\begin{document}

\setlist[description]{font=\normalfont\space}

\title{Faster Variational Execution with Transparent Bytecode Transformation}
\author{Chu-Pan Wong}
\affiliation{%
	\institution{Carnegie Mellon University}
	\country{USA}
}
\author{Jens Meinicke}
\affiliation{%
	\institution{Carnegie Mellon University}
	\country{USA}
}
\affiliation{%
	\institution{University of Magdeburg}
	\country{Germany}
}
\author{Lukas Lazarek}
\affiliation{%
	\institution{Northwestern University}
	\country{USA}
}
\author{Christian K\"astner}
\affiliation{%
	\institution{Carnegie Mellon University}
	\country{USA}
}

\newcommand{\code}[1]{\texttt{\footnotesize #1}}
\newcommand{\todo}[1]{{\color{magenta} \bf \{TODO: {#1}\}}}

\def\modify#1#2#3{{\small\underline{\sf{#1}}:} {\color{red}{\small #2}}
{{\color{red}\mbox{$\Rightarrow$}}} {\color{blue}{#3}}}

\newcommand{\cpwmodify}[2]{\modify{Chu-Pan}{#1}{#2}}

\newcommand\mymargin[1]{\marginpar{{\flushleft\textsc\footnotesize {#1}}}}
\newcommand\cpwmargin[1]{\mymargin{CPW:\;#1}}

\newcommand{\figref}[1]{Figure~\ref{#1}}
\newcommand{\eqnref}[1]{violation~\eqref{#1}}
\newcommand{\secref}[1]{Section~\ref{#1}}
\newcommand{\tblref}[1]{Table~\ref{#1}} 

\newcommand{\cpwmodifyok}[2]{#2}
\newcommand{\cpwmodifyno}[2]{#1} 
\newcommand{\cpwrevise}[2]{\modify{Chu-Pan}{#1}{#2}}

\newcommand{\note}[2]{{\small\color{magenta}\underline{\sf{#1}}:} {\color{magenta}{\small #2}}}
\newcommand{\cpwnote}[1]{\note{Chu-Pan}{#1}}

\renewcommand{\cpwrevise}[2]{#2}

\renewcommand{\cpwmodify}[2]{#2}

\renewcommand{\paragraph}[1]{\vspace{.4em}\noindent\textbf{#1.} } 
\newcommand{\instr}[1]{\texttt{#1}}
\newcommand{\JVM}{Java Virtual Machine}

\begin{abstract}

Variational execution is a novel dynamic analysis technique for exploring
highly configurable systems and accurately tracking information flow. It is
able to efficiently analyze many configurations by aggressively sharing
redundancies of program executions. The idea of variational execution has been
demonstrated to be effective in exploring variations in the program, especially
when the configuration space grows out of control. Existing implementations of
variational execution often require heavy lifting of the runtime interpreter,
which is painstaking and error-prone. Furthermore, the performance of this
approach is suboptimal. For example, the state-of-the-art variational execution
interpreter for Java, VarexJ, slows down executions by 100 to 800~times over a
single execution for small to medium size Java programs. Instead of modifying
existing JVMs, we propose to transform existing bytecode to make it
variational, so it can be executed on an unmodified commodity JVM.  Our
evaluation shows a dramatic improvement on performance over the
state-of-the-art, with a speedup of 2 to 46 times, and high efficiency in
sharing computations. 

\end{abstract}

%
%


\begin{CCSXML}
<ccs2012>
<concept>
<concept_id>10011007.10011006.10011008</concept_id>
<concept_desc>Software and its engineering~General programming languages</concept_desc>
<concept_significance>500</concept_significance>
</concept>
<concept>
<concept_id>10003456.10003457.10003521.10003525</concept_id>
<concept_desc>Social and professional topics~History of programming languages</concept_desc>
<concept_significance>300</concept_significance>
</concept>
</ccs2012>
\end{CCSXML}

\ccsdesc[500]{Software and its engineering~General programming languages}
\ccsdesc[300]{Social and professional topics~History of programming languages}

%
%


\keywords{Java Virtual Machine, Bytecode Transformation, Variational Execution, Configurable System}



\maketitle



\section{Introduction}
\label{sec:intro}

Computer programs often come with variations that allow programs to act
differently according to a user's need. A classic example of variation is
command-line options or configuration files, which often trigger new behaviors
or tweak existing functionalities. Configuration options are widely used
because they provide flexible extension points for adding new features.
However, this flexibility often comes at the cost of potential feature
conflicts. Feature conflicts arise when one feature interferes with another in
an unintended way (also known as feature interaction
problem~\citep{Calder:2003hra, Nhlabatsi:2008wq}). Problems similar to feature
interactions have also been studied in other contexts, such as security
policies, where the sensitivity of a program to various private values is explored
by comparing different executions varying in privacy levels for values~\cite{AYF+:PLAS13, AF:POPL12}. To tackle these problems, we need
a principled yet efficient way of detecting and managing \emph{interactions of
variations}.

There are several challenges posed by \emph{interactions of variations}.  First,
the number of possible interactions is exponential to the number of variations,
making exhaustive testing of all configurations unrealistic in most practical
cases. Second, interactions are often impossible to foresee, for example in the
context of plugin-based systems where plugins are often developed
independently by different developers, making it easy to miss interactions
when sampling only few configurations~\cite{MKR+2015, NL:CSUR11, TAK+:CSUR14}. Third, effects of variations often
propagate globally in programs, making it hard to detect interactions involving
lots of variations even with complex data-flow analysis~\cite{LKB:ASE14}.

To tackle these challenges, researchers have proposed dynamic analysis
techniques that analyze the effects of multiple variations by efficiently
tracking variations at runtime. Researchers have applied these techniques to
various scenarios, such as testing highly configurable
systems~\cite{NKN:ICSE14}, understanding feature
interactions~\cite{MWK+:ASE16} and configuration faults~\cite{SAF:SOSP07}, monitoring information flow of sensitive
data~\cite{AF:POPL12, AYF+:PLAS13, DP:SP10, KLZ+:SP12, KKS+:ASPLOS16, KKS+:ASPLOS15}, and detecting inconsistent updates~\cite{HC:ICSE13, MB:USENIX12,TXZ09}. These techniques are similar, and often
called differently in different communities, such as variability-aware
execution~\cite{Kastner:2012jna, NKN:ICSE14, MWK+:ASE16}, faceted
execution~\cite{AF:POPL12, AYF+:PLAS13}, coalescing
execution~\cite{SBZ+:ICES11}, shared execution~\cite{KKB:ISSRE12}, and
multi-execution~\cite{DeGroef:2012jc,DP:SP10}. Our work is built on
these ideas, and we use the name variational execution in this work, as
we target primarily analyzing and testing configuration options in programs.

Previous studies have shown that variational execution can be useful in many 
scenarios with promising results. 
Since variational execution itself is not a core contribution in this work, we
defer the comprehensive discussion of different applications to Section~\ref{related}.
Nonetheless, we highlight a few interesting results:
\citet{NKN:ICSE14} applied variational execution to identify plugin conflicts
in WordPress. Their variational execution engine can analyze
$2^{50}$ combinations out of $50$ plugins within seven minutes and found a
previously unknown plugin conflict. \citet{MWK+:ASE16} used
variational execution to understand the shape of configuration spaces in
different programs and found interesting characteristics of how options
interact. \citet{AF:POPL12} and \citet{AYF+:PLAS13} demonstrated usefulness of variational
execution in guaranteeing non-interference of sensitive data between different
confidentiality levels. Their prototype implementations in JavaScript and Jeeves
can prevent cross-site scripting attacks and handle complex information flow in
a conference management system.
\citet{SBZ+:ICES11} showed that a form of variational execution can exploit similarities
among data processing of similar inputs and gain a speedup of
2.3 without precision lost.
Although variational execution has been explored before with promising results,
a universal and scalable implementation is still missing. Existing
implementations typically have severe scalability issues and work only with 
small academic examples. With a better implementation, we have a better chance
of scaling existing applications and applying variational
execution to broader application scenarios and more use cases.

Existing implementations rely on either
\emph{manual modification to the source code}~\cite{Schmitz:2016be, AYF+:PLAS13,
schmitzFacetedSecure2018} or
\emph{modification to the language interpreter}~\cite{MWK+:ASE16, NKN:ICSE14}: On the one hand,
variational execution can be achieved by writing the source code to use some
libraries or programming language constructs, so that the programs compute
with multiple values in parallel~\cite{Schmitz:2016be, AYF+:PLAS13,
schmitzFacetedSecure2018}.
Implementations of this kind put a heavy burden on developers because the use of
these libraries or language constructs usually obscures the original programs.
Moreover, rewriting existing programs is often tedious and error-prone. On the other hand,
variational execution can be achieved by executing a normal program with a
special execution engine, such as an interpreter that tracks multiple
values in parallel with special
operational semantics for each instruction~\cite{MWK+:ASE16, NKN:ICSE14,
AF:POPL12}.  Modified interpreters often suffer from a conflict between
functionalities and engineering effort: It would be painstaking to modify a
mature interpreter like OpenJDK, though it fully supports all functionalities of
the language, whereas it takes less engineering effort to modify a
research interpreter such as Java PathFinder~\cite{Havelund:2000ed}, which
however provides incomplete language support and often mediocre performance.
\looseness=-1

We present a new way of implementing variational execution. Our
approach sidesteps \emph{manual modification to the source code} and
\emph{brittle modification to the language interpreter}. The key idea
is to automatically transform programs in their \emph{intermediate
representation}.  Specifically, we transparently modify Java bytecode automatically
to mirror the effects of a manual rewrite. The resulting bytecode
can then be executed on an unmodified commodity JVM.

\looseness=-1
Transforming programs at the intermediate language level has several benefits.
First, intermediate languages often have simple forms and strong specifications,
both of which facilitate automatic transformation. Second, source code is not
required, allowing us to transform also
libraries used in the target programs. We can even analyze other programming languages
that are compilable to the same intermediate language.
Third, existing optimizations of the execution engine
can be reused;  in our case, our transformed bytecode can
take advantage of just-in-time compilation and other optimizations provided by modern JVMs.
Finally, modifications at the intermediate level
remain portable. Our transformed bytecode can be executed on any JVM that
implements the JVM specification.

Transformations are nontrivial and not always local.
While many bytecode instructions can be transformed in isolation,
encoding conditional control flow in a commodity JVM requires careful encoding, such
that both branches of control-flow decisions can be executed in
different configurations, before subsequent computations are merged
again, to maximize sharing overall. In additional, data-flow
analyses are required to handle values on the operand stack
between blocks and object initialization sequences for variational
execution. Finally, we perform additional optimizations to statically 
pinpoint instructions that do not need to be transformed, 
because they are guaranteed to be not related
to variations in the program.

We formally prove that our transformation of 
control flow is correct, statically guarantee optimal sharing 
for a large subset of possible control-flow graphs.
Additionally, we empirically evaluate performance, 
comparing execution time and memory consumption on seven highly
configurable systems against VarexJ, a state-of-the-art variational execution
implementation. The results show that our approach is 2 to 46~times faster
than VarexJ, with 75~percent less memory. The performance results
also indicate that our approach is efficient for analyzing highly configurable
systems in practice.  
\looseness=-1

We summarize our contributions as follow:
\begin{itemize}

	\item We propose a novel strategy for variational execution using automatic
		bytecode transformation,
		without any manual modifications to the source code or to the language
		interpreter.

	\item We prove that our automatic transformation of bytecode is correct for
		all control-flow graphs and optimal with regard to sharing for a large subset.

	\item We propose further optimizations by performing data-flow analysis and using specialized data structures.

	\item We implement a bytecode transformation tool that covers nearly the entire
		instruction set of the Java language, with minor exceptions that we
		explain in Section~\ref{sec:optim}. The transformed bytecode is portable to any 
	implementation of the JVM specification.

	\item An empirical evaluation with 7~subject systems showing that our
		approach is up to 46~times faster while saving up to 75~percent memory when compared to the state-of-the-art. In addition to statically
		guaranteeing optimal sharing for $89.7$~percent of methods, our approach
		achieves optimal sharing at runtime for $99.8$~percent of all other method executions.

\end{itemize}

We hope that the way we transform bytecode can inspire more efficient
implementation of similar techniques such as symbolic execution. Although we
focus on Java bytecode in this work, we can potentially generalize the core
ideas to other programming languages and other analyses, by performing a similar transformation at
the well-defined intermediate representation form of existing compiler
frameworks like LLVM. 

\section{Background and Motivation}
\label{sec:motivation}

As necessary background for our approach, we introduce the core concepts of
variational execution and show how it can be achieved
with manual source-code transformation, hinting at the key ideas of our automatic
bytecode transformation.

\subsection{Variational Execution}

\begin{figure*}[tp]
	\centering
	\includestandalone[print,sort,mode=buildmissing,width=\linewidth]{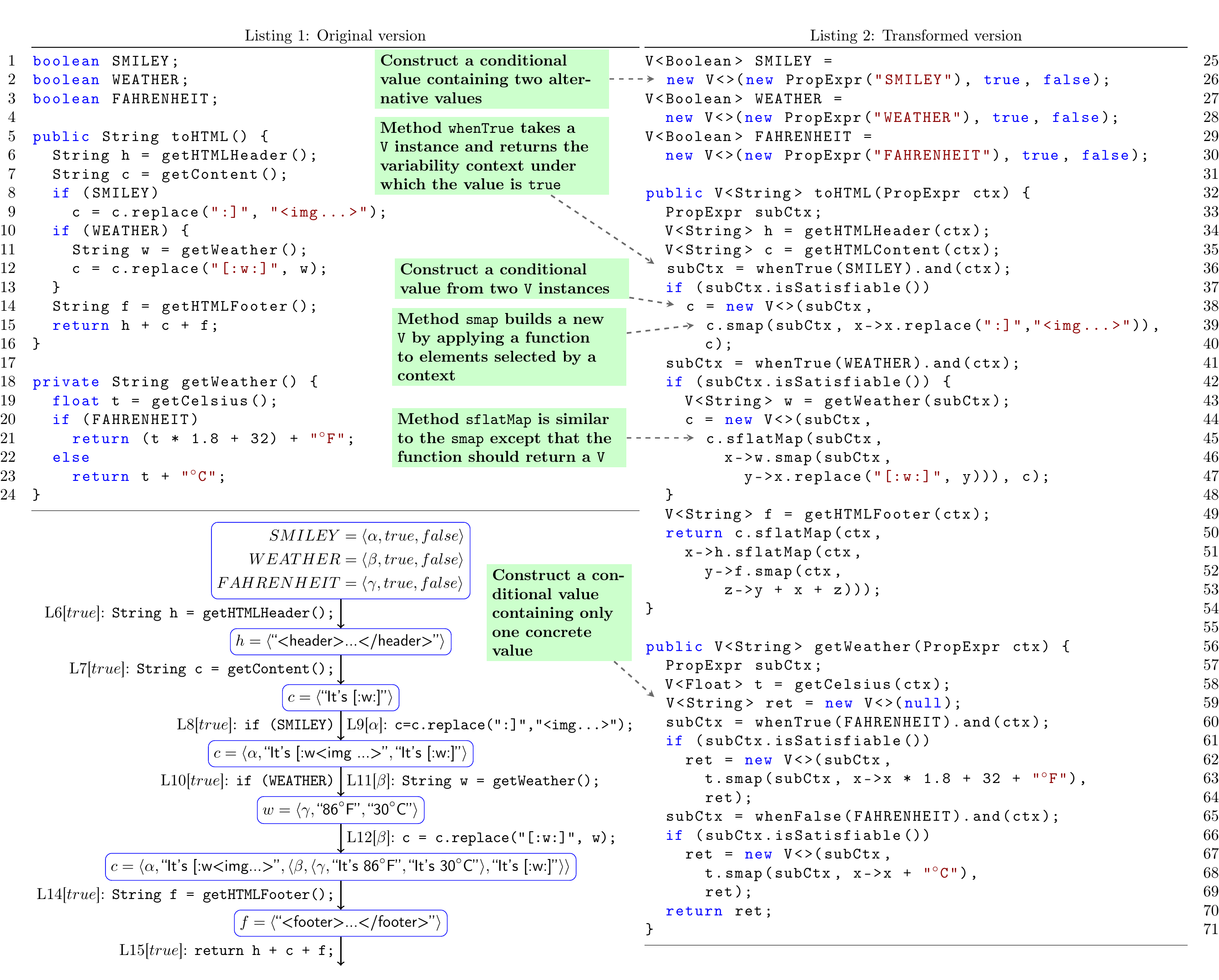}
	\caption{\footnotesize Running example of this paper, modeled after WordPress~\cite{MWK+:ASE16}. Listing 1
	shows the original source code without variational execution. Bottom left
illustrates variational execution by showing the execution trace. Listing 2
hints at our variational execution transformation. The
transformation is shown in Java for better readability.}
	\label{fig:running}
	\vspace*{-1em}
\end{figure*}

There are two main concepts that distinguish variational execution from concrete
execution: \textbf{conditional values} and \textbf{variability contexts}.

The key idea of variational execution is to execute a program with
\emph{concrete values}, but support \emph{multiple alternative concrete values}
for different configurations. That is, whereas each variable has one
concrete value in concrete execution (e.g., $x = 1$), the concrete value of a variable
may depend on the configuration in variational execution---we say the variable
has a \textbf{conditional value}~\cite{EW:GTTSE13}. A conditional value
does not store a separate value for each configuration (exponentially many),
but partitions the configuration space into \emph{partial spaces} which share
the same value. That is, all configurations sharing the same concrete value are
represented only once in the conditional value. Partial configuration spaces
are expressed through propositional formulas over configuration options, such
as $(a \lor b) \land \neg c$ representing the potentially large set of all
configurations in which configuration options $a$ or $b$ are selected but not
$c$; a tautology (denoted as \texttt{true}) describes all configurations, a
contradiction (denoted as \texttt{false}) none.  Conditional values are
typically expressed through possibly-nested choices over formulas (or
if-then-else expressions), such as $x = \langle a, \langle \neg b \lor c, 1, 3
\rangle, 2 \rangle$, which means: $x$ has the value $1$ in the partial space $a
\land (\neg b \lor c)$, $3$ in $a \land \neg (\neg b \lor c)$, and $2$ in $\neg
a$. With this representation, we can reason about configuration spaces with SAT
solvers and BDDs.
\looseness=-1

Variational execution uses conditional values with the notion of performing a
computation conditionally in a \textbf{variability context}, similar to
a path condition in symbolic execution: An operation will
only modify values in the part of the configuration space indicated by the current variability context (that is, we
conceptually split the execution). Again, formulas over configuration options
are used to express the variability context.

We assume a finite configuration space in which we know concrete values for all
configuration options.
Throughout this paper, we focus on boolean options when we discuss configuration
options because they are common and
easy to reason about with standard tools. Other options (e.g., strings, numeric
values) with \emph{finite domains} can be encoded as a set of boolean options.
Using other solvers to circumvent the encoding is possible as well. The support
of options with finite domains is entirely hidden behind the abstraction of
conditional values, so it is orthogonal to the discussion in this paper.
\looseness=-1

Operations on conditional values can often be shared.
If none of the used variables have alternative values, an
instruction only needs to be executed once for all configurations (we say that
we are executing under the \texttt{true} context). We begin execution in the
\texttt{true} context, and only split into restricted variability contexts when
configuration options influence execution---directly or indirectly. This
conservative execution splitting strategy allows us to aggressively share
executions that would otherwise be repeated once per configuration. This sharing
avoids nonessential computations and makes variational execution efficient in
many scenarios.

\paragraph{Comparing to Symbolic Execution} Despite some similar concepts,
there are important differences between variational execution and symbolic
execution.  A \emph{conditional value} in variational execution is
fundamentally different from a \emph{symbolic value} in symbolic execution, in
that the former represents \emph{a finite number of concrete values} while the
latter often represents \emph{an infinite set of possible values} of a given
data type.  Unlike symbolic execution where operations are carried out on
symbolic values, variational execution always computes with concrete values;
symbols are used only to describe configuration spaces for
distinguishing alternatives and for describing contexts, but never intermix
with concrete values. For this reason, loop bounds are always known concrete
values in variational execution, and we avoid other undecidability problems.
By considering finite configuration spaces, reasoning about configuration space
of conditional values involves \emph{inexpensive and decidable} satisfiability
checks with SAT solvers or BDDs, while symbolic execution is often limited by
\emph{expensive} constraint solving and the types of theories the underlying
constraint solver supports. For instance, reasoning about array elements in
variational execution is fast, because we know the concrete array indexes and
elements, in contrast to symbolic execution where a symbolic array index can
dramatically slow down constraint solving because it can potentially refer to
every element in the array.  \looseness=-1

Furthermore, variational execution has different concepts of managing state and
forking and joining when compared to symbolic execution. Symbolic execution
often forks new states either completely or partially at every conditional
branch, often resulting into exponentially many paths in practice, commonly
known as the path explosion problem.  For example, \citet{MWK+:ASE16} has
demonstrated that state-of-the-art symbolic execution implementations for Java
split of separate executions on variability and share only a common prefix.
Some symbolic execution engines merge state from different paths to share
executions after control flow decisions, for example, introducing new symbolic
values or using \emph{if-then-else} expressions to represent differences among
values from different paths---different designs make different tradeoffs with
regard to performance, precision, and implementation
effort~\cite{Baldoni:2018kg,SNG+:FSE}.  Our implementation of variational
execution uses a design that maximizes sharing. It maintains a single
representation of all state throughout the execution where differences are
represented at fine granularity (variables and fields) with conditional values.
State is always modified under the current variability context, which is
equivalent to merging state after every single statement. In addition, we join
control flow as early as possible to avoid repeating executions, as we will
discuss in Section~\ref{sec:control}.

\paragraph{Example of Variational Execution}
As an example, consider Listing 1 in Figure~\ref{fig:running}, a simplified
implementation of a blogging system modeled after WordPress. The blogging system
has three variations, based on options for smiley rendering and inlining weather reports,
which affect how HTML code is generated.
In its current form, there is an issue: if both \texttt{SMILEY}
and \texttt{WEATHER} are enabled, the replacement of a smiley image takes
precedence and breaks the expansion of weather information, resulting in
outputs like ``[:w\Laughey''.
\looseness=-1

In order to ensure the absence of interaction bugs like
this, typical testing techniques would try all combinations one by one,
resulting into 8 executions of the same program in this case. Moreover, single
executions alone reveal little information about the causes of interaction bugs,
especially for cases where interactions of options have global effects on the
execution.

Variational execution is much more efficient for detecting interactions like
this.
The execution trace in the bottom left of Figure~\ref{fig:running} illustrates
how variational execution explores all possible interactions among
\texttt{SMILEY}, \texttt{WEATHER} and \texttt{FAHRENHEIT} in \emph{a single
run}. Boxes represent relevant program states and arrows denote execution steps.
The executed statements are displayed beside arrows, together with the
variability contexts. An execution trace like this can also be generated by
logging and aligning concrete executions of all possible configurations, but
\citet{2018arXiv} showed that variational execution is much more efficient,
sidestepping correctness and performance issues of alignment.

After marking the three \texttt{boolean} fields as options (e.g. via Java
annotation), variational execution initializes them with 
\emph{conditional values}, representing both \texttt{true} and \texttt{false}.
The symbols $\alpha$, $\beta$, $\gamma$ denote the three options respectively.
Variational execution runs Line 6 and Line 7 once under the \emph{variability
context} of \texttt{true}, meaning that they are shared across all configurations.
Sharing like this enables variational execution to explore large configuration
spaces efficiently. To highlight sharing, we put all shared statements 
to the left of the arrows in the execution trace.
The execution is split when it comes to the
first \texttt{if} statement, where \texttt{c} is modified only under the
\emph{variability context} of \texttt{SMILEY}. At this point, the content of
\texttt{c} changes from \emph{containing one value for all configurations} to
\emph{having two alternative values depending on option} \texttt{SMILEY}, and this
change is reflected in the \emph{conditional value} assigned to \texttt{c}.
Finally, variational execution is able to share the execution of common code
again at Line~14, after splitting executions in two \texttt{if} branches.

This example illustrates the benefits of variational execution. We can spot the
problematic interaction of \texttt{SMILEY} and \texttt{WEATHER} by inspecting 
the conditional value of \texttt{c}, as shown in the execution trace. 
In fact, all possible interactions are recorded and
detectable by inspecting \emph{conditional values} during the variational
execution. All information about how options interact can be obtained after one
single run of variational execution, in contrast to exponentially many with
normal execution, and the difference
would still not be obvious without aligning all traces of normal execution. The
effectiveness of
variational execution comes from using \emph{variability context} to manage
splitting and sharing of executions.

\paragraph{VarexJ}
The state-of-the-art implementation of variational execution for Java is
VarexJ~\cite{MWK+:ASE16}. VarexJ is implemented on top of Java
PathFinder's (JPF) interpreter for bytecode~\cite{Havelund:2000ed}.  For this
reason, VarexJ also inherits several limitations that restrict the programs it
can analyze.  First, JPF is not a complete implementation of the JVM
specification, so it provides incomplete support for language features, such as
concurrency and native methods. Second, unlike commodity JVMs, JPF does not
provide any optimizations over programs being executed, such as just-in-time compilation. Third,
executing programs on JPF is slow because JPF itself is implemented as a
single-process Java application. The nature of meta-circular interpreting
causes a significant performance penalty. Our approach sidesteps these problems by
not modifying the JVM, but transforming the Java bytecode.

\subsection{A Manual Rewrite}

To illustrate how variational execution can be achieved on a commodity JVM,
we illustrate how the source code of our WordPress example in
Figure~\ref{fig:running} can be manually rewritten
in Listing~2 of Figure~\ref{fig:running}. We show the rewrite in Java source
code for better readability, as the same program in bytecode is typically longer
and harder to read, potentially obscuring the essential ideas of our rewriting.
This manual rewrite in Listing~2 also introduces the key ideas (highlighted as
floating boxes) used later in our automated bytecode transformation. A rewrite
in bytecode is also available in the appendix.

We introduce \emph{variability contexts} in all methods, represented by instances of
the \texttt{PropExpr} class, which model propositional expression over
configuration options.
Variables are rewritten to use a new \texttt{V} type to store \emph{conditional values},
either a single value for all configurations or different values for different
configurations.
To manipulate values in \texttt{V} objects, we use \texttt{smap} and \texttt{sflatMap}
methods.  
The \texttt{smap} method applies a function to each alternative value of a
\texttt{V}, and the \texttt{sflatMap} method does the same but allows to split
configuration spaces, producing more alternatives.
For example, the operation \texttt{v.smap(ctx, f)} on a conditional value
\texttt{v} of type \texttt{V<T>} takes as arguments (1) a variability context \texttt{ctx}
and (2) a function literal \texttt{f} of type \texttt{T => U}, representing the
pending operation. It returns a new V instance of type \texttt{V<U>} that results from applying the
function \texttt{f} to each concrete element that exists under \texttt{ctx} in \texttt{v} (recall that
a conditional value stores concrete values along with the variability contexts
under which they exist). The \texttt{sflatMap} method works similarly,
but takes functions of type \texttt{T => V<U>}.

Note that the manual rewrite shown in Listing~2 is not exactly the same as our 
bytecode transformation, but close enough to show the key ideas. An automatic
rewrite of our running example in bytecode is available in the appendix for reference.
We will discuss in detail how such a rewrite can be automated in bytecode in
Section~\ref{sec:trans} and Section~\ref{sec:control}. 
Nonetheless, we can already preview a few key points from this manual rewrite:

\begin{itemize}
	\item Variables store conditional values, represented by \texttt{V} objects.
	\item Most operations on conditional values (e.g., calling the
		\texttt{replace} method, String concatenation) are redirected with
		\texttt{smap} and \texttt{sflatMap} and applied to all alternative
		concrete values. It fact, this replacement is sufficient for most
		bytecode instructions, as we will see in Section~\ref{subsec:basic}.
	\item Both the \texttt{if} branch and the \texttt{else} branch of an
		\texttt{if-else} statement are transformed into an \texttt{if} statement,
		a statement that checks whether there exists any partial configuration under
		which the surrounded code will be executed. If such a partial
		configuration exists, the surrounded code will be executed under a
		restricted variability context (e.g., Line 36--40). We will discuss transformation of
		control-transfer instructions in Section~\ref{sec:control}.
	\item All method calls have one additional parameter \texttt{ctx},
		representing the variability context under which this method is called.
		The variability context restricts all instructions of that method
		invocation. Also, multiple \texttt{return} statements in the same method
		are replaced with temporary assignment to a local variable, which is
		returned in the end of the method. Transformation of method calls and
		method returns will be further discussed in Section~\ref{subsec:method}.
\end{itemize}

The transformation from normal code to variational code is nontrivial and
obscures the program. For example, we almost double the size of Listing~1 in
order to transform a simple example into a variational execution version.
The introduction of \texttt{smap} calls and complicated control-transfer structures
also obscure the intention of the original program, making it
hard to understand and debug. This puts a
heavy burden on the developers to understand variational execution
and how to use it correctly. All of these issues can be resolved if we
adapt an automatic transformation approach that is transparent to
developers. As we will see later in Section~\ref{sec:optim}, our transformation
is also able to automatically decide which parts of a program need to
be transformed, as it is likely that some parts are not related to variations,
such as the code before the first \texttt{if} statement (Line~7) and the code
after the second \texttt{if} statement (Line~14) in Listing~1.

\section{Bytecode Transformation}
\label{sec:trans}


We discuss our transformation in two steps. First, in this section, we
discuss how to transform all instructions that are executed in
a given variability context. The transformation of control flow, which may change
variability context, is nontrivial and
orthogonal, so we discuss it second in Section~\ref{sec:control}.
We describe transformations for similar instructions together, following
the grouping of the JVM
specification~\cite{Buckley:2015tu}.
\looseness=-1

In a nutshell, we transform each bytecode instruction of the original program into
a sequence of bytecode instructions. Ideally, the transformation of most
instructions should be local, meaning that the transformation of the current
instruction should not be affected by other instructions around it. However,
this locality assumption is not generally possible because an instruction often affects another
instruction by leaving data on the operand stack. The operand stack is used internally in
the JVM for exchanging data between instructions. Some instructions load values
(e.g., constants or values from
local variables or fields) onto the operand stack, while other instructions take
values from the operand stack and operate on them. 
Results might be pushed back onto the operand stack as a result of an operation.
The operand stack is also used to prepare parameters to be
passed to method invocations and to receive return values.

To assist local transformation of individual instructions, we introduce several
transformation invariants:
\looseness=-1

\begin{enumerate}[label=\textit{Invariant \arabic*}, align=left]

	\item \label{i:local} All local variables and fields store conditional values.

	\item \label{i:stack} All values on the operand stack are conditional values.

	\item \label{i:method} All methods take conditional
		values as parameters and return conditional values.

\end{enumerate}

We ensure that these invariants hold \emph{before and after} the execution of each
transformed bytecode sequence. They help us establish a common ground about the
environment, enabling concise transformations of most instructions.
In addition, we assume that each instruction is executed in a local variability context.
We will explain how variability contexts are propagated and changed as part
of our discussion of control flow in Section~\ref{sec:control}.



\subsection{Basic Lifting}
\label{subsec:basic}

To achieve our invariants, we change all parameters and local variables
in a method frame to the \texttt{V} type to store conditional values.
Primitive types are boxed in the process.

\paragraph{Load and Store Instructions} Load and store instructions transfer
values between the local variables and the operand stack.
Since we assume local variables and stack values to represent conditional
values (\ref{i:local}, \ref{i:stack}), we can directly load them with the \instr{aload} instruction
(replacing load instructions for primitive types if needed).
Store instructions require more attention, because they may be executed under a restricted
variability context, in which case not all values shall be overwritten.
For example, suppose we have $x = 1$ under context \texttt{true}, but
store $2$ to $x$ under context \texttt{A}, then $x$ stores the conditional value
 $\langle A, 2, 1\rangle$ instead of 2. To this end, we
always create a new conditional value, compressing the updated values under the current context with
possibly unaffected old values. As an example, consider the \texttt{V}
constructor call when \texttt{c} is updated in Line 38--40 of Listing 2.
\looseness=-1

\paragraph{Arithmetic and Type Conversion Instructions} Arithmetic and type
conversion instructions compute a result based on one or two values
from the operand stack, and then push the result back on the operand stack.
For example, the \instr{iadd} instruction takes two $int$ values from the stack, adds
them together and pushes the result back.
Given \ref{i:stack}, we need to pop and push conditional values.
We achieve this by invoking \texttt{smap} with the current variability context on the stack's conditional values,
performing the original arithmetic or type conversion operation on each alternative concrete value.
For operations on two conditional values, we combine \texttt{sflatMap} and \texttt{smap}
to compute results for all possible combinations.
For example, the original floating point calculation in Line 21 of
Figure~\ref{fig:running} is transformed to a \texttt{smap} call in Line 63.

\paragraph{Operand Stack Management Instructions} Operand stack management
instructions directly manipulate entries on the operand stack, such as
\instr{pop} for discarding the top value, and \instr{swap} for swapping the top
two values. They work the same for conditional values and concrete values,
and therefore do not need to be transformed.
A technical subtlety in Java is that some primitive values (e.g., long, double) are represented by two
32-bit values on the stack, but only by a single reference value
for a conditional value; here we adjust stack operations accordingly.

\subsection{Method Invocation and Return}
\label{subsec:method}

Method invocations 
pass the top stack values
as arguments to the method and push the method's result back to the stack.
Non-static methods also take their receiver from the stack.
Since method arguments and return types are conditional values, just as
stack values (\ref{i:stack}, \ref{i:method}), they can be passed along
directly. If a method call has multiple receiver objects, we call the
method for each of them in the corresponding variability context and merge results
using a \texttt{sflatMap} call.
\looseness=-1

Special handling is required though in cases in which \ref{i:method}
does not hold for the target. Ideally, all classes and all methods in variational
execution should be transformed, but this is not always possible in practice because of the
environment barrier. At some point, variational programs may need to interact
with an environment that does not know about variational values and variability
contexts. The environment barrier can be at different places, depending
on how the system is implemented
(e.g., between user code and library code, between Java code and native
code, between the program and the operating system or network), but can never be avoided entirely.
When hitting the environment barrier, we have three options:


	\paragraph{Multiple invocations} For \emph{side effect free} methods, we can invoke
the target method multiple times for each
feasible combinations of concrete argument values, merging the results
into a single conditional value.
Since the method is side effect free, invoking it repeatedly with different arguments does
not change the program states, it just forgoes potential sharing.
\looseness=-1

	\paragraph{Model classes} We can always provide variational models for the
environment, for example, replacing all reads and writes to a file with a
special implementation that can store alternative file context under different
contexts. Such model classes are common in model checking and symbolic execution~\cite{ALM:SE08,SNG+:FSE,RARF:JPF11} and have been
explored in variants of variational execution for database storage~\cite{YHA+:PLDI16}.
Model classes can also be used to provide more efficient variational implementations
for classes than would be achieved with our automated transformation, as we will discuss
in Section~\ref{sec:optim}.


		\paragraph{Abort} Finally, we can execute the program but abort
	execution when we reach the environment barrier at runtime.
	This way, we can still support executions that do not cross the barrier,
	even though the source code refers to nonvariational methods.
	Furthermore, we can allow calls to nonvariational methods during the execution
	when they are shared
	by all configurations (with variability context \emph{true}) in which
	all parameters have only a single concrete value.


In our approach, we transform all methods possible, including libraries,
to push the environment barrier as far outside as possible.
In the JVM, the environment barrier often manifests as native methods, i.e., methods
that are hard-coded in the JVM in other programming languages such as C and C++.
We maintain a list of model classes and side-effect free methods that are
automatically applied when encountered. For all remaining calls to
nonvariational code, we issue warnings during transformation and abort
the execution at runtime when invoked.
We then manually and incrementally inspect aborts in our executions and mark methods as side-effect free
or develop model classes as needed.
In fact, so far, we needed to implement model classes only for a small number of classes.
We have not yet encountered executions that heavily rely on variational interactions
with the environment and thus require additional model classes.

Return instructions are more straightforward to transform than method invocation
instructions. To not prematurely end the execution of a method at a return
instruction, we rewrite the method to use a single return instruction at the end
of the method. If the method being transformed has more than one return
instructions, we rewrite all of them to jump to a single return at the end of the
transformed method.
If necessary, we store the values of different original return instructions in a variable.
Technically, we again replace all non-void returns by a single \instr{areturn}
instruction, returning a reference to the resulting conditional value.
For example, see
how Line 21 and~23 are transformed to Lines 62, 67 and~70 in Figure~\ref{fig:running}.

\subsection{Using Objects}
\label{subsec:object}

In the JVM, both \emph{class instances} and \emph{arrays} are objects, but the
JVM creates and manipulates class instances and arrays using distinct sets of
instructions. This section presents our transformation of them respectively.
\looseness=-1

\paragraph{Class Instances}
We transform all fields of a class instance to have the conditional value type.
The key idea is to maximize sharing of data across similar class instances. If
two instances of the same class only differ in one field, we represent the difference in
a conditional value for that field, rather than as a conditional
reference to two copies of the object. This design stores variability as local
as possible to avoid redundancy in memory and in computations~\cite{MWK+:ASE16}.
As fields store conditional values (\ref{i:local}), reads and writes to fields
work just as loads and stores to local variables.
\looseness=-1

A technical challenge to independent transformation of bytecode instructions
arises for the \instr{new} instruction used to instantiate classes and push
them to the operand stack. The challenge is that the \instr{new} instruction
creates an uninitialized object that cannot be passed as a reference for safety reasons until
the object's constructor is invoked on it, and thus cannot be wrapped in
a \texttt{V} type as needed for \ref{i:stack}.
Instead, we treat \instr{new} and the subsequent initialization
sequence as \emph{one bytecode instruction} for our transformation.
Whenever we encounter a \instr{new} instruction, we use a
data-flow analysis to identify the relevant following initialization
sequence, re-arranging the original bytecode if necessary to separate
object initialization from other instructions (e.g., instructions to compute constructor parameters).

\paragraph{Arrays} For a given array, we transform it into an array of
conditional values to again store variability as local as possible to preserve sharing.
To support arrays of different length though and fulfill our invariants,
we support also variations of arrays.
That is, an array of objects (\texttt{Object[]}) would be represented
as a conditional array of conditional objects (\texttt{V<V<Object>[]>}).
Type erasure in Java complicates the implementation, but this can be solved by
inserting additional dynamic type checks.
\looseness=-1

We arrived at this design after considering several tradeoffs:
Our representation can store variability more locally, avoiding
that a single variation in an entry requires to copy the entire array;
also load and store operations are simple and fast.
Overheads are only encountered for arrays with different length
in different configurations, which is less common than variability
in values in our experience.
An alternative design could loosen our invariants for arrays and
create a single maximum-length array of conditional values
(based on the length of the configuration with the longest array; \texttt{V<Object>[]})
and a shadow variable and extra instructions for bookkeeping and length checking,
but we only expect marginal performance benefits from this more complicated design.
\looseness=-1


\section{Control Transfer}
\label{sec:control}

After describing how to transform bytecode instructions within a given
variability context, we now focus on how to transform control-flow related
constructions that may change variability contexts by splitting or joining executions.
For example, in a branching statement the condition may differ among configurations,
such that we may need to execute both
branches under corresponding variability contexts, but join afterward
to maximally share subsequent executions.

We significantly change the way programs are executed and track and change variability
contexts. As introduced in Section~\ref{sec:motivation}, variability contexts
are propositional formulas over configuration options that describe the partial configuration
space for which an instruction is executed, similar to path conditions in symbolic execution.
Instructions executed in a variability context only have an effect on the state of
that partial configuration space, as discussed, for example, for store instructions in
Section~\ref{subsec:basic}. The challenge is now to propagate and change
variability context to achieve a shared execution for all configurations with maximal sharing.

In this section, we explain how we structure the program in blocks with the same
variability context, and how we transfer control and contexts among these blocks.
Subsequently, we then discuss two important properties of our design:
(1) that variational execution preserves behavior of the original program
(\emph{Correct Execution Property}) and (2) that control transfer among blocks is
efficient (\emph{Optimal Sharing Property}).  Finally, we present some technical
challenges and their solutions regarding stack values during control transfer.

\subsection{VBlock}

We group all instructions that are statically guaranteed to always share
the same variability context at runtime in a \emph{VBlock}.
VBlocks are separated by \emph{conditional} jumps, that is, jumps that
may depend on conditional values, in which case we may ``split'' the execution.
After executing multiple VBlocks we may ``join'' the execution in another
VBlock with a broader variability context (such joining is rare in symbolic
execution approaches).
For example, String replacement of a smiley image (Line 9) in
Listing 1 has a more restricted context than the \texttt{getHTMLHeader} call
(Line 6) because Line 9 is only executed when \texttt{SMILEY} is \texttt{true},
whereas the later \texttt{getHTMLFooter} call is again shared among all configurations.
\looseness=-1

VBlocks are similar to basic blocks in traditional program analyses. However, unlike basic
blocks, which group individual instructions together because they are always
executed in sequence, VBlocks group basic blocks together because they always
share the same variability context. Thus, there can be jumps inside a VBlock
as long as they do not depend on conditional values and thus share the same variability context.

Bytecode instructions can be partitioned into VBlocks by merging basic blocks in
a control-flow graph iteratively until a fixpoint is reached.
A block $B_1$ can be merged with a successor $B_2$ if the jump between
$B_1$ and $B_2$ is not conditional (e.g., \instr{goto} or \emph{if} statement
with non-conditional expression)\footnote{
	\ref{i:stack} implies that all values evaluated
	in an \emph{if} statement are conditional, however, as we will discuss
	later in Section~\ref{sec:optim}, we can optimize the transformation
	to statically recognize values that will not depend on configuration
	options, including, in the simplest case, constants.
} and all predecessors of $B_2$ are in the same VBlock. The latter condition
is needed to recognize potential join points, when a block can be reached from
two different VBlocks.
Hence, a VBlock can be terminated by either
a conditional jump or an unconditional jump. A VBlock can end with 
an unconditional jump if, for example, while
merging basic blocks to form VBlocks, basic block
\texttt{A} has an unconditional jump to basic block \texttt{C}, while basic
block \texttt{B} has a conditional jump to \texttt{C}. We cannot merge
\texttt{A} and \texttt{C} into one VBlock because of the conditional jump
from \texttt{B}. Thus, \texttt{A} and \texttt{C} have to be separated into
two different VBlocks with an unconditional jump between them.

\subsection{Execution Strategy}\label{subsec:execution}

This subsection presents how VBlocks are used. We first outline the goals of
using VBlocks to achieve splitting and joining execution. Then, we present a
solution that achieves our goals and provide an example.
\looseness=-1

\paragraph{Goals}
Whereas variational-execution approaches that modify interpreters
(such as VarexJ) can track multiple instruction pointers and their
variability contexts, we need to
cope with the fact that the instruction pointer of an unmodified JVM can only point
to a single location at a time. So instead of changing the control transfer
mechanism of the JVM, we use VBlocks to organize and create the execution order
we want. At a high level, we pursue the following:

\begin{itemize}

	\item Both branches of a conditional jump can be executed under
		corresponding restricted contexts (we call them ``subcontexts''). That is,
		we are able to split execution.

	\item The code after both branches of a conditional jump should be executed
		only once for mutually exclusive contexts. That is, we should join execution as early as possible.

\end{itemize}

\paragraph{Context propagation} Using VBlocks, we modify control flow decisions and manipulate variability contexts
to achieve splitting and joining. The key idea of our design is to associate each
VBlock with a variability context (a fresh local variable). We dynamically update variability contexts
along execution to keep track of which VBlock(s) can be executed next and under which context.
At any point in a method's execution, all VBlocks with a \emph{satisfiable} variability context
(i.e., the proposition formula is satisfiable) can be executed.
The order in which multiple VBlocks with satisfiable contexts are executed does
not matter for correctness, but does matter for performance, as we will show in Section~\ref{subsec:properties}.
\looseness=-1

At a jump between VBlocks, we transfer the current block's variability context
to the target block's context. If the jump is conditional, we split the
current variability context and transfer the two mutually exclusive contexts to the
two successor VBlocks of the jump. The split is determined by the partial configuration
space in which the \emph{if} statement's expression evaluates to true.



To describe the control transfer more precisely, let us
denote the sequence of VBlocks as $b_0, b_1, \ldots, b_n (n \geq 0)$, where
$b_0$ represents the entry node in the control flow graph and $b_n$ represents
the exit node. Let us denote the variability context of a VBlock  $b_i$ as $\phi(b_i)$  (stored in a fresh local variable for each VBlock).
\begin{itemize}
	\item At the beginning of a method
execution, we initialize $\phi(b_0)$ with the method context, and $\phi(b_i) =
\textit{False}$ for all other VBlocks to indicate that only the initial VBlock of the
method can be executed.

	\item After executing a VBlock $b_i$, we remember its variability context $\Phi=\phi(b_i)$
	and then set that variability to $\textit{False}$,
indicating that this block should not be immediately executed again.
	We subsequently propagate its prior variability context $\Phi$ as follows:

	\begin{enumerate}

\item If the execution of VBlock $b_i$ ends with an unconditional jump (e.g., \instr{goto} instruction) to another VBlock $b_j$, the context of
$b_j$ is updated as a disjunction between the current context of $b_j$ and $b_i$'s prior context $\Phi$. A disjunction is required because the target block may already
have been executable under a different context, which we now broaden to join executions.
\begin{equation}\label{eq:nojump}
	\phi'(b_j) = \phi(b_j) \lor \Phi
\end{equation}

\item If the execution of VBlock $b_i$ ends with a conditional jump with two possible target VBlocks
	$b_j$ and $b_k$,\footnote{We transform switch statements into an equivalent
	series of if-else statements to simplify our design of control transfer.}
	we split the execution based on the condition of the jump
	(usually the top value on the stack representing result of evaluating an \emph{if}
	statement's expression).
	Let us denote the variability context in which the jump condition indicates a jump
	to $b_j$ as $X$. For example, the condition of the first \emph{if} statement
	in our WordPress example is $\langle SMILEY, 1, 0 \rangle$, which indicates the
	\emph{then} branch should be taken under context $X=SMILEY$.
	 We update the variability contexts of $b_j$ and $b_k$ as follows,
	again considering potential joins:
\begin{align}\label{eq:jump}
	\phi'(b_j) &= \phi(b_j) \lor (X \land \Phi) & \phi'(b_k) &= \phi(b_k) \lor
	(\neg X \land \Phi)
\end{align}

\end{enumerate}

	\item After propagating the variability context, the control transfer (i.e., the actual
	instruction pointer in the JVM) does not actually follow
	the jump.

\end{itemize}

\paragraph{Execution Order} The actual execution order though (in terms of moving the instruction pointer) is independent from the transfer of variability
contexts. We start execution at the beginning of the method with $b_0$.
At the end of a VBlock $b_i$, we jump to the next VBlock $b_{i+1}$ by default, even if the block ended with a different jump.
If that VBlock's variability context is unsatisfiable, we proceed to the next VBlock, and so forth.
We only jump back to a VBlock with a lower index (using a plain \instr{goto} instruction) when we update the variability context of an earlier block
to be satisfiable as part of the described context transfer.
This way, the instruction pointer is always at an unsatisfiable block (to be skipped) or at the satisfiable block with the lowest index.
This strategy ensures that later VBlocks are always executed with
joined variability contexts from earlier VBlocks and that VBlock $b_n$ is executed last with
the full method context. For that reason,
the indexing order of VBlocks matters. Figure~\ref{fig:jumpexample} illustrates the idea of jumping among VBlocks with a concrete example.

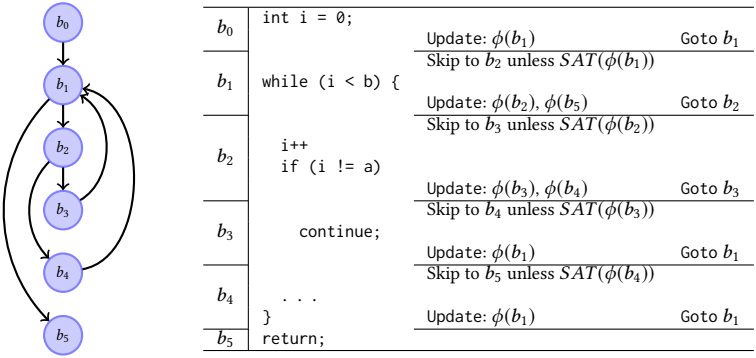
\begin{figure}[!tpb]
	\begin{subfigure}[bt]{0.2\textwidth}
		\begin{tikzpicture}[scale=0.6, transform shape,
			n/.style={circle,draw=blue!50,fill=blue!20,thick,inner sep=1.5mm,minimum size=2mm} ]
			\node[n] (A) {$b_0$};
			\node[n] (B) [below=0.5cmof A] {$b_1$};
			\node[n] (C) [below=0.5cmof B] {$b_2$};
			\node[n] (D) [below=0.5cmof C] {$b_3$};
			\node[n] (E) [below=0.5cmof D] {$b_4$};
			\node[n] (F) [below=0.5cmof E] {$b_5$};
			\draw[->, thick] (A) to (B);
			\draw[->, thick] (B) to (C);
			\draw[->, thick] (C) to (D);
			\draw[->, thick] (C) to[bend right=45] (E);
			\draw[->, thick] (D) to[bend right=60] (B);
			\draw[->, thick] (E) to[bend right=80] (B);
			\draw[->, thick] (B) to[bend right=45] (F);
		\end{tikzpicture}
	\end{subfigure}
	\begin{subfigure}[bt]{0.6\textwidth}
		\scriptsize
		\renewcommand{\arraystretch}{1}
		\begin{tabular}{l | l ll}
			\hline

			\multirow{2}{*}{$b_0$} & \texttt{int i = 0;} & & \\
								   & & \verb|Update|: $\phi(b_1)$ & \verb|Goto| $b_1$\\

			\cline{1-1}\cline{3-4}
			\multirow{3}{*}{$b_1$} & & Skip to $b_2$ unless $SAT(\phi(b_1))$&\\
								   & \texttt{while (i < b) \{} & &  \\
								   & & \verb|Update|: $\phi(b_2)$, $\phi(b_5)$ & \verb|Goto| $b_2$\\
			\cline{1-1}\cline{3-4}
			\multirow{4}{*}{$b_2$} & & Skip to $b_3$ unless $SAT(\phi(b_2))$&\\
								   & \texttt{\quad i++} & & \\
			                       & \texttt{\quad if (i != a)} & & \\
								   & & \verb|Update|: $\phi(b_3)$, $\phi(b_4)$ & \verb|Goto| $b_3$\\
			\cline{1-1}\cline{3-4}
			\multirow{3}{*}{$b_3$} & & Skip to $b_4$ unless $SAT(\phi(b_3))$&\\
								   & \texttt{\quad\quad continue;} & & \\
								   & & \verb|Update|: $\phi(b_1)$ & \verb|Goto| $b_1$ \\
			\cline{1-1}\cline{3-4}

			\multirow{3}{*}{$b_4$} & & Skip to $b_5$ unless $SAT(\phi(b_4))$&\\
								   & \texttt{\quad \ldots} &  & \\
								   & \texttt{\}}  & \verb|Update|: $\phi(b_1)$ & \verb|Goto| $b_1$ \\ 
			\cline{1-1}\cline{3-4}
			\multirow{1}{*}{$b_5$} & \texttt{return;} & &  \\
			\hline
		\end{tabular}
	\end{subfigure}
	\caption{\footnotesize An example illustrating control-flow encoding through updates of variability contexts and jumps between blocks.}
	\label{fig:jumpexample}
\end{figure}


\paragraph{Ordering VBlock Execution}
Given that we always execute the first VBlock with a satisfiable variability context
and always join at later VBlocks, we can execute the same method in different
ways by changing the way we order the VBlocks.
We can reorder VBlocks in different orders as long as the first and last VBlock remain constant
(the last block ending with a return statement must be executed last) and always
achieve equivalent (i.e., correct) results, as we will show in
Section~\ref{subsec:properties}.
However, as the block order determines the join points, different orders
may be more or less effective at joining early and sharing subsequent computations.

To maximize sharing during the execution (i.e., prefer executing a block
once under a broader variability context rather than multiple times under
narrow contexts), we order VBlocks based on the \emph{strict transitive predecessor}
relation in the control-flow graph. A VBlock $b_i$ is a strict transitive
predecessor of $b_j$ if there is a path from $b_i$ to $b_j$ in control-flow
graph, but not from $b_j$ to $b_i$ (i.e., not in a loop).
\textit{For any pair of VBlocks, if one VBlock is a strict transitive
	predecessor of the other,
	the transitive predecessor shall have the lower VBlock index to be executed first.}
For other pairs, we preserve the original lexical order produced by the compiler as
a default.

In the next subsection, we will show that the above partial order is
sufficient to statically guarantee optimal sharing on a subset of control-flow graphs,
regardless of the original lexical order of the bytecode, but that optimality
cannot be statically guaranteed for all control-flow graphs. We will also
experimentally show in Section~\ref{sec:eval} that this order is nearly always optimal for the remaining
control-flow graphs.

\paragraph{Example} Let us exemplify our solution by stepping through the
\texttt{getWeather} method in Listing 2. There are four VBlocks: code before the
\texttt{if} statement ($b_0$, Line~57-59), then branch ($b_1$, Line~60-64), else
branch ($b_2$, Line~65-69) and
return block ($b_3$, Line~70). 
These blocks are already indexed according to the strict transitive predecessor relation:
$b_0$ is executed first, $b_1$ and $b_2$ are executed
before $b_3$; the order between $b_1$ and $b_2$ is merely
derived from the lexical order and could be switched.
 Initially, $\phi(b_0) = \textit{MCtx}\ \text{(method context)}$
and $\phi(b_1)=\phi(b_2)=\phi(b_3) = \textit{False}$.  After executing $b_0$ at Line~59, $\phi(b_1)$
and $\phi(b_2)$ are updated to $\phi(b_1) = \textit{False} \lor (\textit{FAHRENHEIT} \land \textit{MCtx})$
and $\phi(b_2) = \textit{False} \lor (\neg \textit{FAHRENHEIT} \land \textit{MCtx})$, thus
\emph{splitting} the execution. Note that this update of contexts is not shown
in Listing~2 because we transform the control flow in bytecode differently from
how we show for Java. 
At this point, both $\phi(b_1)$ and
$\phi(b_2)$ are satisfiable and execution continues with the next VBlock $b_1$.
After executing $b_1$ at Line~64, $\phi(b_3)$ is updated to $\phi(b_3) =
\textit{False} \lor (\textit{FAHRENHEIT} \land \textit{MCtx})$ because $b_3$ is the sole successor of $b_1$ in
the control flow graph. We execute the next satisfiable block, which is $b_2$,
after which $\phi(b_3)$ is updated to
$\phi(b_3) = (\neg \textit{FAHRENHEIT} \land \textit{MCtx}) \lor (\textit{FAHRENHEIT} \land \textit{MCtx}) = \textit{MCtx}$;
thus, $b_3$ at Line~70 is executed last under the \emph{joined} context $\textit{MCtx}$.

\subsection{Properties}\label{subsec:properties}

We have presented how we choose VBlocks for execution. While splitting executions,
we need to ensure that the execution order is correct. By always executing
the satisfiable VBlock with the lowest index first and ordering VBlocks deliberately,
we make sure that the joining happens as early as possible. This section formalizes these properties.

\paragraph{Correctness} The following property ensures that our variational
execution is correct, in a sense that it preserves the semantics of the original
program.

\begin{property*}[Correct Execution Property]\label{property:correct}
	At any point of execution, if there are multiple VBlocks with satisfiable
	contexts, the order in which they are executed does not affect correctness of execution.
	\looseness=-1
\end{property*}

To prove this, we first introduce a useful lemma:

\begin{lemma*}[Disjoint Context Lemma]
	At any point of variational execution, the context of two different VBlocks
	are mutually exclusive. That is, $\phi(b_i) \land \phi(b_j) = \textit{False}$ for any
	$i \neq j$.
\end{lemma*}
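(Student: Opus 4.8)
The plan is to prove the lemma by induction on the number of context-propagation steps performed so far, taking the pairwise-disjointness statement itself as the induction invariant. Since the stored contexts $\phi(b_i)$ change only at the propagation points described by equations \eqref{eq:nojump} and \eqref{eq:jump} (within a single VBlock nothing alters them, so every ``point of execution'' agrees with the contexts produced by the most recent propagation), it suffices to show that each individual propagation step preserves the invariant. For the base case, at the start of a method we have $\phi(b_0) = \textit{MCtx}$ and $\phi(b_i) = \textit{False}$ for every $i \neq 0$, so any conjunction $\phi(b_i) \land \phi(b_j)$ with $i \neq j$ has a $\textit{False}$ operand and equals $\textit{False}$.

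For the inductive step, assume the contexts are pairwise disjoint and that we execute the satisfiable VBlock $b_i$. Write $\Phi = \phi(b_i)$ for its remembered context; the step first sets $\phi(b_i) := \textit{False}$ and then updates the successor context(s). The key consequence of the induction hypothesis is that $\Phi$ is disjoint from every other context, i.e. $\Phi \land \phi(b_m) = \textit{False}$ for all $m \neq i$. In the unconditional case \eqref{eq:nojump} only $\phi(b_j)$ changes, to $\phi(b_j) \lor \Phi$; for any $m \notin \{i,j\}$ distributivity gives
\[
(\phi(b_j) \lor \Phi) \land \phi(b_m) = (\phi(b_j) \land \phi(b_m)) \lor (\Phi \land \phi(b_m)) = \textit{False},
\]
using the hypothesis on both disjuncts, while the pair $(b_i,b_j)$ is disjoint because $\phi(b_i)$ is now $\textit{False}$.

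The conditional case \eqref{eq:jump} follows the same pattern but additionally requires checking the pair $(b_j,b_k)$. Expanding
\[
(\phi(b_j) \lor (X \land \Phi)) \land (\phi(b_k) \lor (\neg X \land \Phi))
\]
into four conjuncts, three vanish by the hypothesis (the term $\phi(b_j)\land\phi(b_k)$ directly, and the two mixed terms because each contains a factor $\Phi \land \phi(b_j)$ or $\Phi \land \phi(b_k)$), and the last conjunct $(X \land \Phi)\land(\neg X \land \Phi)$ vanishes because $X \land \neg X = \textit{False}$. This final point is the conceptual heart of the argument: splitting $\Phi$ into the two pieces $X \land \Phi$ and $\neg X \land \Phi$ yields genuinely disjoint successors, which is precisely why a conditional jump cannot introduce overlap.

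I expect the main obstacle to be bookkeeping rather than insight: keeping the case analysis exhaustive yet short, and handling degenerate targets such as a self-loop ($j = i$) or a conditional jump whose target coincides with $b_i$. The observation that tames these cases is the ordering of operations within a step -- resetting $\phi(b_i)$ to $\textit{False}$ \emph{before} applying the disjunctive updates -- so that only the remembered $\Phi$, and never a stale stored context, is re-injected into successors; the computations above then cover a self-loop target uniformly (e.g. an unconditional self-loop sets $\phi'(b_i) = \textit{False} \lor \Phi = \Phi$, which is disjoint from every other context by hypothesis). I would state this ordering explicitly at the outset so the degenerate cases need no separate treatment.
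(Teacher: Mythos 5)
Your proof is correct and follows essentially the same route as the paper's: induction with pairwise disjointness as the invariant, the same base case, and the same case split on unconditional versus conditional jumps, with the conditional case resolved by distributing the conjunction into four terms and killing the last one via $X \land \neg X = \textit{False}$. Your explicit treatment of the reset-before-update ordering and of self-loop targets is a small but welcome tightening that the paper's version leaves implicit.
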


Mutual exclusion is guaranteed by the way we propagate contexts in Equations \ref{eq:nojump} and~\ref{eq:jump}.
A proof by induction can be found in the appendix. With this lemma, we prove our \emph{Correct Execution Property} as
follows:

\begin{proof}
	Ensuring that VBlocks have mutually exclusive variability contexts guarantees that each VBlock
	operates on mutually exclusive runtime states. As we have discussed in
	Section~\ref{sec:trans}, states (e.g., local variables, fields) are stored
	separately for different contexts using conditional values. Our variability
	contexts further ensure that all state changes only update values in the
	(disjoint) contexts referred to by the current variability context. Thus,
	execution order among satisfiable VBlocks does not affect correctness of
	overall variational execution.
\end{proof}

\paragraph{Optimal sharing} The main utility of variational execution is its
ability to share common computations; our execution scheme pursues to perform
executions with the broadest variability context possible.
While we cannot share repeated executions under the same context, we can
avoid executing the same VBlock under mutually exclusive contexts and
rather execute it once, shared, under a broader context.
In a nutshell, what we want to achieve is to
execute every VBlock as few times as possible by sharing the execution of
VBlocks in different contexts. This sharing is crucial for the overall
performance of variational execution, otherwise it may degrade to
executing each variation in a brute-force way or sharing only common prefixes of traces, conceptually equivalent to
joining only after the very last instruction.

In order to formalize \emph{optimal sharing}, we define a variational
trace as a chronological sequence of VBlocks executed during variational
execution.
We denote a variational trace as a sequence of executed VBlocks with corresponding
variability context, e.g., $t_v = [b_0^{True}, b_1^{\alpha},
b_2^{\neg \alpha}, b_3^{True}]$. Conceptually, a variational
trace corresponds to a separate concrete trace for each configuration,
in our example $t_{\alpha}$ =
[$b_0, b_1, b_3]$ and $t_{\neg \alpha}$=[$b_0, b_2, b_3]$.
Another variational
execution trace that represents the same concrete traces could be
$t_v'=[b_0^{T}, b_1^{\alpha}, b_3^{\alpha}, b_2^{\neg \alpha}, b_3^{\neg
\alpha}]$. It is likely that $t_v$ is more efficient than $t_v'$ because $b_3$
is executed twice in $t_v'$.

A variational trace can be seen as the result of aligning multiple concrete
traces. Different aligning schemes produce different variational traces (e.g.,
$t_v$ and $t_v'$). Given a set of concrete traces, we can use sequence alignment
algorithms (e.g., Needleman-Wunsch algorithm~\cite{Needleman:1970gm}) to obtain
a globally optimal solution of merging concrete traces. For example, two optimal
matchings of $t_{\alpha}$ and $t_{\neg \alpha}$ are $t_o=[b_0, b_1, b_2, b_3]$ and
$[b_0, b_2, b_1, b_3]$.  We use $length(t)$ to denote the number of elements in
a trace. For example, $length(t_{v})=4$, and $length(t_v')=5$.

\begin{definition}[Optimal Sharing]\label{def:optimal}
	Given a variational trace $t_v$ and its corresponding set of concrete traces
	$t_1, t_2, \ldots, t_m$, we say $t_v$ has optimal sharing if and only if
	$length(t_v) = length(t_{o})$, where $t_{o}$ is the
	optimal matching of $t_1, t_2, \ldots, t_m$.
\end{definition}


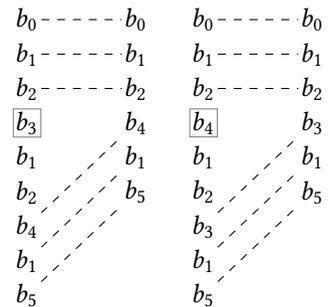
\begin{wrapfigure}{r}{0.4\textwidth}
	\centering
	\begin{tikzpicture}[scale=1, transform shape, node distance=2pt, inner
		sep=1pt]
		\node (11) {$b_0$};
		\node (12) [below=of 11] {$b_1$};
		\node (13) [below=of 12] {$b_2$};
		\node (14) [rectangle, draw=black!50,below=of 13] {$b_3$};
		\node (15) [below=of 14] {$b_1$};
		\node (16) [below=of 15] {$b_2$};
		\node (17) [below=of 16] {$b_4$};
		\node (18) [below=of 17] {$b_1$};
		\node (19) [below=of 18] {$b_5$};

		\node (21) [right=of 11, xshift=1cm] {$b_0$};
		\node (22) [below=of 21] {$b_1$};
		\node (23) [below=of 22] {$b_2$};
		\node (24) [below=of 23] {$b_4$};
		\node (25) [below=of 24] {$b_1$};
		\node (26) [below=of 25] {$b_5$};

		\draw[-,dashed] (11) to (21);
		\draw[-,dashed] (12) to (22);
		\draw[-,dashed] (13) to (23);
		\draw[-,dashed] (17) to (24);
		\draw[-,dashed] (18) to (25);
		\draw[-,dashed] (19) to (26);
	\end{tikzpicture}
	\hspace{1em}
	\begin{tikzpicture}[scale=1, transform shape, node distance=2pt, inner
		sep=1pt]
		\node (11) {$b_0$};
		\node (12) [below=of 11] {$b_1$};
		\node (13) [below=of 12] {$b_2$};
		\node (14) [rectangle, draw=black!50,below=of 13] {$b_4$};
		\node (15) [below=of 14] {$b_1$};
		\node (16) [below=of 15] {$b_2$};
		\node (17) [below=of 16] {$b_3$};
		\node (18) [below=of 17] {$b_1$};
		\node (19) [below=of 18] {$b_5$};

		\node (21) [right=of 11, xshift=1cm] {$b_0$};
		\node (22) [below=of 21] {$b_1$};
		\node (23) [below=of 22] {$b_2$};
		\node (24) [below=of 23] {$b_3$};
		\node (25) [below=of 24] {$b_1$};
		\node (26) [below=of 25] {$b_5$};

		\draw[-,dashed] (11) to (21);
		\draw[-,dashed] (12) to (22);
		\draw[-,dashed] (13) to (23);
		\draw[-,dashed] (17) to (24);
		\draw[-,dashed] (18) to (25);
		\draw[-,dashed] (19) to (26);
	\end{tikzpicture}
\caption{\footnotesize An example where static order between VBlocks cannot not achieve optimal
sharing. The control-flow graph is shown in Figure~\ref{fig:jumpexample}}

\label{fig:counter_example}
\vspace*{-2em}
\end{wrapfigure}

It would be ideal if optimal sharing could be achieved for all possible
programs in the wild, but there is no join strategy that could \emph{statically} order
blocks to guarantee optimal sharing for all executions of all programs.
Figure~\ref{fig:counter_example} illustrates an example: In order to achieve
optimal sharing (with optimal defined as the optimal trace alignment in the
figure), there is both a case where $b_3$ needs to be executed before $b_4$ and a
case where $b_4$ needs to be executed before $b_3$ after a control-flow decision at
$b_2$ (critical nodes highlighted in the trace).  That is, we cannot \emph{statically}
decide an ordering between $b_3$ and $b_4$, and even an optimal decision at runtime
would have to depend on knowing the \emph{future} execution trace.  We could
apply some greedy strategies to approximate optimality, but that the required
runtime monitoring is unlikely to justify the performance benefits of
additional sharing execution.


Fortunately, we can prove optimal sharing for static VBlock ordering for many
shapes of control-flow graphs and will show in our empirical
evaluation that the remaining ones (often with nontrivial interleaving of
looping and branching instructions)  are often optimal for actual executions.

\begin{property*}[Optimal Sharing Property]\label{property:optimal}
	Given a control flow graph where each node represents a VBlock, our
	variational execution based on the strict transitive predecessor relation on this graph has optimal sharing if it is acyclic or
	only contains simple loops. A loop is a simple loop if it satisfies the
	following three criteria: (1) has only one loop header; (2) has only one
	exiting node; (3) has no conditional jumps among nodes in the loop.
\end{property*}

The proof can be found in the appendix. Intuitively, we prove by case
analysis that our variational trace has the same length as the optimal alignment
of corresponding concrete traces in every possible case. Since we only consider
simple control-flow graphs, the length of our variational trace and the length
of the optimal alignment trace can be determined from the structure of the control-flow graph.


\subsection{Values on the Stack between VBlocks}

\begin{wrapfigure}{r}{0.33\textwidth}
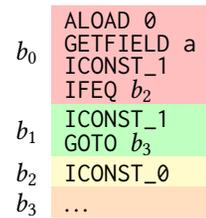


		\centering
		\renewcommand{\arraystretch}{0.6}
		\begin{tabular}{ l  l }
				&\cellcolor{red!25}\\[-0.5em]

				\multirow{4}{*}{$b_0$} & \cellcolor{red!25}\texttt{ALOAD 0}\\
				& \cellcolor{red!25}\texttt{GETFIELD a}\\
				& \cellcolor{red!25}\texttt{ICONST\_1}\\
				& \cellcolor{red!25}\texttt{IFEQ $b_2$}\\

				&\cellcolor{green!25}\\[-0.5em]

				\multirow{2}{*}{$b_1$} & \cellcolor{green!25}\texttt{ICONST\_1}\\
				& \cellcolor{green!25}\texttt{GOTO $b_3$}\\

				&\cellcolor{yellow!25}\\[-0.5em]

				$b_2$ & \cellcolor{yellow!25}\texttt{ICONST\_0}\\

				&\cellcolor{orange!25}\\[-0.5em]

				$b_3$ & \cellcolor{orange!25}\ldots\\

		\end{tabular}
	\caption{\footnotesize A snippet of bytecode showing the scenario where VBlocks could
	leave some values on the operand stack after execution.}
	\label{fig:UnbalanceStack}
	\vspace*{-2em}
\end{wrapfigure}

In Java, blocks can leave values on the operand stack to be
consumed by subsequent blocks. Since, in variational execution,
there might be multiple successor blocks that will be executed,
and successor blocks may not be executed immediately after their
predecessor, sharing values on the stack becomes tricky.
Since the operand stack in a commodity JVM is not variational itself,
we cannot pop the same value from the stack under different variability contexts
as possible when modifying the interpreter itself (e.g., done in
VarexJ~\cite{MWK+:ASE16}).

Figure~\ref{fig:UnbalanceStack} shows a concrete example in which VBlock $b_0$
leaves some value on the operand stack after execution, that both blocks $b_1$
and $b_2$ try to read. Conversely, those two blocks each leaves a (different) value on the stack
that $b_3$ attempts to consume.



To make the transition between VBlocks possible in all cases, we need one more
invariant:

\begin{enumerate}[label=\textit{Invariant \arabic*}, align=left, resume]
	\item\label{i:boundary} A VBlock does not leave any values on the operand stack at
		the VBlock boundary.
\end{enumerate}

To meet this invariant, we store all remaining values on the operand stack (if
they exist) to local variables, at the end of each VBlock. Then at the beginning
of each VBlock, we check if the current VBlock expects some values from the
operand stack, and load those values from corresponding local variables if
so. Since we support loading and storing under different variability contexts,
as discussed above, this solution generalizes to all control-flow graphs.

\section{Implementations, Optimizations, Limitations}
\label{sec:optim}

We implemented a bytecode transformation tool for the ideas discussed in
Section~\ref{sec:trans} and Section~\ref{sec:control}, and we call it
\emph{VarexC}.  We use the ASM\footnote{\url{http://asm.ow2.org/}} library to
implement our data flow analysis and transformation of bytecode. Transformations
happen at class loading time via our own class loader that transforms classes
before they are actually loaded. We also save the previously transformed classes
and reuse them if there are no changes. To ensure correct implementation of
variational execution, we apply differential testing to compare execution results and
execution traces in variational execution against brute-force concrete
executions for our subject systems~\cite{K:17}. 
Our implementation is available on
GitHub.\footnote{\url{https://github.com/chupanw/vbc}}
We implement transformations for all bytecode instructions and provide
a mechanism for model classes, as discussed in
Section~\ref{subsec:method}. 
In addition to implementing the full transformation described previously,
we explored two optimizations and briefly outline them.
Finally, we discuss the current limitations of our tool.\looseness=-1


\subsection{Optimization: Deciding What to Transform}\label{subsec:dfa}

Not all bytecode instructions in a program may depend on configuration
options. If we can statically guarantee that parts of the program never
depend on conditional values or conditional jumps, we can reduce our
transformation to relevant parts; this reduces the overhead of computing with conditional values and conditional jumps where not needed.
Guaranteed non-conditional computations happen often in the beginning
of methods and typically involve initialization sequences or constants,
such as in logging statements ``\texttt{System.out.println("done");}''.

We designed a simple data-flow analysis to decide which instructions need to
be transformed. Along the lines of a standard taint analysis, we mark all
local variables and values on the operand stack as conditional or
unconditional with a `lift' bit,  marking them as conditional when
instructions based on other conditional values write to them.

So far, we implemented an intra-procedural analysis that assumes all fields
(including fields representing configuration options, as in our WordPress
example) and method parameters and method results are conditional.
As such,
all stack values produced by field reads, loads of method parameters, and
results from method invocations are marked with the lift bit. We then
propagate the lift bit to all values resulting from computations in
which operands had the lift bit and to local variables when such values
are stored.
Based on the lift bit, we decide which control-flow decisions are
conditional (i.e., potentially depend on conditional values) and
compute VBlocks correspondingly.
Finally, we determine with a simple control-flow analysis,
which VBlocks are guaranteed to be executed with the method's variability
context (in a nutshell, all VBlocks that dominate the method exit) and mark all variables
stored in other VBlocks as conditional, as they may be stored only
in restricted contexts. As is common for data-flow analyses, we repeat
these computations until a fixpoint is reached.

Based on our analysis, we transform bytecode based on a potentially smaller
number of VBlocks (because some jumps are statically guaranteed to be
non-conditional
when their expression does not have the lift bit). We also
transform only variables and instructions with the lift bit. We introduce
additional instructions to  translate concrete values into conditional values
when values flow from unmodified into transformed code (i.e., just wrapping
the concrete value in a \texttt{V} instance, boxing primitive types if
necessary), such that our invariants still hold from the perspective of the
transformed instructions.
\looseness=-1

Our current analysis is very conservative, because it assumes all fields and
method signatures are conditional, thus most savings relate to constants and
initialization sequences. Nonetheless, in the programs of our evaluation
(Sec.~\ref{sec:eval}), we can statically decide to not lift up to $32.6$~percent
of all instructions, which however has only a marginal impact on performance.
We hope that future work can push this analysis even further by performing an
inter-procedural analysis to determine which methods and method arguments need
to be transformed, potentially providing multiple transformed or partially
transformed copies of the same method.

\subsection{Optimization: Using Model Classes}

As discussed in Section~\ref{subsec:method}, we provide a mechanism for  model classes with which we can implement custom implementations for
classes where automated transformations are not possible (e.g., native
methods, environment barrier) or inefficient.
In fact, it is often possible to provide more efficient implementations
of common data-structure implementations that are specifically designed for
variability~\cite{Meng:2017bx,Walkingshaw:2014ch}. Our model-class mechanism
allows drop-in replacements for such classes.


\paragraph{Variational data structures}  We implemented a small number of
custom variational data structures for commonly used collections. For example,
instead of an automated transformation of the \texttt{java.util.LinkedList}
class, which would support conditional values and conditional successors of
linked-list nodes, we use a custom implementation that internally stores a
list of optional elements and provides corresponding accessor functions.
Similarly, rather than automated transformation of \texttt{java.util.HashSet}
objects, we can represent variational sets as a mapping from values to variability
contexts that describe the configuration space in which the set contains that
value. As explored by Walkingshaw et al., such tailored representations are
often (though not generally) much more efficient, especially when they hold
many optional elements with different conditions
\cite{Meng:2017bx,Walkingshaw:2014ch}.



Depending on different computations in different programs, the effectiveness of
model classes varies. For example, our model \texttt{LinkedList} is optimized
for iterating elements, so programs that iterate lists of optional entries
frequently gain more benefits from our model classes. Our evaluation shows
different levels of improvement after the drop-in replacement of some model
classes, with up to 6 times speedup for GPL.

\paragraph{Custom access patterns}  While custom data structures can store
conditional entries more efficiently, common accessor patterns to iterate over
list entries can still be  very inefficient. For example, getting first the
first, then the second element of a list with optional entries $[1^\alpha,
2^\beta, 3^\gamma, 4, 5]$ would create large conditional values (e.g.,
$\langle \alpha, 1, \langle \beta, 2, \langle \gamma, 3, 4
\rangle\rangle\rangle$ for the first element).

Instead, we detect common access patterns and transform them more
intelligently. Instead of iterating over all elements of the list one by one
(where each element can be a conditional value), we iterate over all optional
elements, where the element is a concrete value, but the iteration is executed
under a restricted variability context based on that element's condition, which
marks under what context this element exists in the list. We integrate the most
recent detection and rewrite of such access patterns of \citet{Lazarek:2017km}.
Our current implementation detects loops that use \texttt{iterator} and
automatically transform such loops to use our specialized list more
efficiently. 

In our evaluation, there were only few instances that benefited from this
optimization, but if they did, the improvements were substantial.  For example,
in \emph{CheckStyle} (see Sec.~\ref{sec:eval}), the program iterates over a
list of $135$ optional checks.  The basic transformation results in exponential
behavior, that makes it infeasible to execute the code without manual rewrites
of the CheckStyle implementation,
whereas our optimization of access patterns allows to execute this code
fragments efficiently.  Overall, several researchers have explored variational
data structures and access patterns recently~\cite{Meng:2017bx,Walkingshaw:2014ch}. 
Model classes and additional
rewrites during the transformation allow us to easily integrate such advances
to improve performance of variational execution on real-world systems.

\subsection{Limitations}

Our current implementation of variational execution has some limitations, most of 
which are related to low-level details of the JVM or restrictions posed by 
the Java runtime (e.g., we cannot directly modify classes in the \texttt{java.lang} package for safety reasons). 
Most limitations are engineering challenges that can be overcome with
additional implementations, typically in the form of model classes.
\looseness=-1


\paragraph{Exception} We distinguish two types of exceptions: non-variational
exceptions and variational exceptions. \emph{Non-variational exceptions} are thrown or propagated under
the current method context. \emph{Variational exceptions} are thrown or propagated under a smaller
context than the current method context (i.e., only in some partial configurations). 
Semantically, non-variational
exceptions represent cases where invoking a method under \texttt{ctx}
would always result in the same exception under \emph{all configurations} of
\texttt{ctx}, whereas variational exceptions occur \emph{only in some partial configurations}.

Non-variational exceptions are easy to support because the control flow is the
same for all configurations. In fact, we only found non-variational exceptions
when executing our subject systems in evaluation.

Variational exceptions are trickier to handle because method execution might be
interrupted under some partial configurations. If the exception is not caught
inside method invocation, returning from a method results into a normal return
value in some configurations and an uncaught exception in other configurations.  Although it is
possible to support variational exceptions by delaying throwing them
and wrapping them together with normal return values as a conditional value, the
transformation would complicate the control flow of transformed bytecode in a
nontrivial way, especially if exceptions are supposed to be caught inside the
current method or some outer methods. Since variational exceptions are not that
common in our experience, we adopt a less efficient but easier approach to
support variational exceptions: The key idea is to throw an exception
immediately when it occurs and continue the rest of the variational execution
only under the variability context of the exception; then we restart variational 
execution under the remaining contexts that did not result in the previous exception, and keep 
repeating until all contexts have been explored. Re-executions might affect
overall efficiencies of variational execution, but we only observed variational
exceptions in our own artificial examples.


\paragraph{Model classes} We only implemented a handful of model classes (9
classes and in total 1030 lines of Java code) to tackle the environment barrier
required by our subject programs.  We consider all classes that have native
methods and classes that are closely related to internals of the JVM to be
behind the environment barrier and use the strategies discusses above, including
repeated invocations and model classes.  

Currently we support a large set of Java programs, but we may need to provide
more model classes if another program uses certain advanced language features.
We adopt an incremental approach, in which we carefully monitor the need for
model classes at runtime (i.e., when conditional values are passed across
environment barriers). When implementing model classes, our main focus is to
support conditional values. Symbolic execution and model checking face a similar 
challenge, but we argue that the implementation effort is lower for variational
execution, because we compute with concrete values and can therefore
delegate to existing implementations rather than reimplement abstractions of
those operations.

\paragraph{Reflection}
Reflection is relatively simple to handle due to the dynamic nature of our
approach. Since reflection cannot modify bytecode (i.e., we cannot introduce
conditional instructions at runtime), it does not affect our bytecode
transformation of classes. We intercept reflection calls and replace them with
our special call stubs, where we wrap arguments into conditional values, append
variability context to the argument list, and invoke the transformed method,
just as we would transform bytecode statically. We have implemented partial
support for reflection as needed by our subject systems incrementally.

\paragraph{Synchronization} 
Two instructions (\instr{monitorenter} and \instr{monitorexit}) are used to
synchronize concurrent operations. We currently keep them as is, which implies that we lock
sections for all configurations, not just in the current variability context;
this may lead to over-synchronization and potential liveness issues.
		
\paragraph{Array} 
As we will see in Section~\ref{sec:eval}, array operations are
generally expensive in their current form.  Especially when crossing the
environment barrier, we may need to translate conditional arrays into plain
concrete arrays, which can be relatively expensive if the arrays are large. A
more efficient implementation of variational arrays would be future work.

\paragraph{Comparing to VarexJ} Our approach comes with its own limitations, but
most of them can be improved with additional engineering effort. We sidestep most
bottlenecks of the state-of-the-art approach (VarexJ) by transforming bytecode
instead of modifying the underlying JVM. Although the limitations of VarexJ can
potentially also be removed by more engineering effort, we argue that the effort
in VarexJ is much higher because of those additional complexities from the JVM
itself. As an example, native methods are notoriously difficult to support in
Java PathFinder (JPF), the underlying JVM of VarexJ, largely because JPF has its
own memory model for objects, which cannot be passed to native methods directly
and therefore require additional conversion. In contrast, native methods can be
supported in our approach by providing simple model classes to handle
conditional values so that concrete values can be passed to native methods.


\section{Empirical Evaluation}
\label{sec:eval}

In an empirical evaluation, we now execute a number of configurable systems to
assess performance (time and memory consumption) and effectiveness of sharing. 
Specifically, we compare our implementation against
repeatedly executing the unmodified code in all 
configurations (brute-force execution)\footnote{For benchmark programs that have
more than 20 configuration options, we randomly select 1 million valid
configurations for measuring.} and against VarexJ~\cite{MWK+:ASE16}, a
state-of-the-art variational execution engine for Java, which executes bytecode
with a modified virtual machine based on Java Pathfinder.
While performance measures implicitly indicate the benefits of sharing,
we additionally empirically assess how often our VBlock ordering
results in optimal sharing at runtime, especially for methods
for which we cannot guarantee optimal sharing statically.






\subsection{Experimental Setup}


\paragraph{Benchmarks} Table~\ref{tab:result} shows the benchmark programs used
in this study. For comparability, we use the same set of benchmark programs
from VarexJ~\cite{MWK+:ASE16}, which includes programs from various
domains: Jetty 7 is a HTTP server; Checkstyle is a static coding style checker
for Java programs; Prevayler is an in-memory database system; QuEval is an
academic evaluation framework for database index structures; Elevator, GPL and
E-Mail are commonly used benchmarks from the software product-line community
that are designed to have many variations. These programs have 6 to 141
options, each of which is a \texttt{boolean} controlling inclusion or exclusion
of a feature.  Feature combinations are usually restricted by a feature
model~\citep{SLW:12}. The goal of analyzing these programs is to estimate
the effort of exploring a big configuration space, which can be useful for
testing, static analysis, and so forth. We execute each program with a
representative input, which in each system covers all configuration options and
significant parts of implementation. For example, we feed Checkstyle with 4
Java source code files and use $135$ different checkers to check coding style.
\looseness=-1

\paragraph{Implementation and Hardware} To compare with our tool \emph{VarexC}, we use
the latest VarexJ code base as of 12/12/2017, which includes the most recent
optimizations, added after the last publication. Both VarexC and VarexJ are
executed with Java HotSpot\textsuperscript{TM} 64-bit Server VM (v1.8.0\_161).
We use a laptop with 2.30GHz Intel Core i7 CPU and 16GB system memory. All
results are measured when the machine is idle and unloaded.

\paragraph{Performance Measurement} 
We measure performance in three different settings:
\begin{itemize}
	\item
First, we measure the performance of executing the unmodified program in every single configuration separately on a commodity JVM. Since the execution time may differ significantly between configurations, we report both the 
average execution time (reported as $\mu\text{JVM}$) and the execution time of the slowest
configuration (reported as $\max\text{JVM}$).
	\item Second, we measure the time it takes VarexJ, the state of the art variational interpreter built on top of Java Pathfinder, to execute 
	the program across all configurations (reported as \emph{VarexJ}).
	\item Finally, we measure how long it takes to execute the program across all configurations by executing the modified
	bytecode with a commodity JVM (reported as \emph{VarexC}).
\end{itemize} 
Ideally, the performance of variational execution (\emph{VarexJ} and \emph{VarexC}) would be between
the execution time of the slowest configuration ($\max\text{JVM}$) and the combined 
execution time of all configurations ($\mu\text{JVM}\cdot\text{number of configurations}$): Variational execution needs to at least execute all instructions of
the slowest configuration, but it can usually share effort among multiple configurations.
\looseness=-1

In all three cases, we measure steady-state performance for each
benchmark, based on repeated executions~\cite{Georges:2007br}. Steady-state measurement excludes JVM startup
time, which typically dominates by JIT compilation and class loading. We do not
compare startup performance because VarexJ is implemented as a Java interpreter
itself---in addition to loading classes of benchmark programs, VarexJ needs to
load a lot of necessary classes for the meta-circular interpreter to work, which
would bias our results against VarexJ. For VarexC, we exclude the bytecode
transformation time from measurement because transformation happens once for
each program, similar to compiling source code. We only measure VarexC with
\emph{all optimizations} (see Section~\ref{sec:optim}) for brevity. Following the
suggestion from \citet{Georges:2007br}, we measure steady-state
performance in the following steps:
\begin{enumerate}
	\item Start a JVM invocation $i$ and iterate the benchmark until a
		steady-state is reached, i.e., once the coefficient of variation (CoV)
		of $10$ consecutive iterations falls below a predefined threshold, which
		is $0.02$ in our case.
	\item For the JVM invocation $i$, compute the mean execution time of those
		10 steady iterations, and denote it as $\bar{x}_i$.
	\item Repeat Step (1) and (2) for $10$ times and compute the overall mean
		$\bar{x} = \frac{\sum_{i=1}^{10}\bar{x}_i}{10}$. Finally, we report $\bar{x}$ as
		the measurement result.
\end{enumerate}

In the above measurement, Step (1) and (2) are designed to warm up the JVM,
excluding factors like class loading and JIT compilation. These factors are less
interesting to our evaluation because our main goal is to measure performance of
variational execution. The coefficient of variation threshold is useful for
controlling the effect of garbage collection. Step (3) is designed to minimize
non-determinism of JIT compilation across JVM invocations, because JVM uses
timer-based sampling to drive JIT optimization (e.g., which methods to optimize,
at what level).  Other main sources of non-determinism include thread scheduling
and garbage collection. Thread scheduling is less of a concern for us because
all programs except Jetty are single-threaded. Regarding Jetty, we configure
Jetty to run a small server that has minimal thread scheduling. 
\citet{Georges:2007br} recommends reporting a confidence interval instead of
the mean alone. However, as we will show, the performance difference
between our approach and VarexJ is so large that reporting confidence
intervals is unnecessary. The difference is so obvious and the variation
so small in comparison that statistical tests are not needed.
Due to this large effect size, we omit confidence intervals for
brevity.

\paragraph{Memory usage} To measure memory usage, we calculate the used heap
space by calling APIs of \texttt{java.lang.Runtime} at \emph{every method
entry}, and then record the \emph{maximum} heap space used throughout the
entire JVM invocation.  Even with this frequent sampling, we cannot guarantee
accurate measurement of memory usage, largely because of the non-deterministic
garbage collection and bulk memory allocation. Thus, the memory measurement is
only useful for coarse-grained comparison.  For VarexC and  VarexJ, we perform
each single measurement on a given subject program by executing it \emph{once}.
As a comparison goal, we also measure the memory usage of executing one
representative configuration on a commodity JVM (reported as \emph{JVM}). The
representative configuration is chosen as a valid configuration with the most
features enabled. Since VarexC and VarexJ explore the entire configuration
space, their memory consumption is strictly larger than execution of a single
configuration. To reduce noise, we repeat each measurement 10 times and report
the average.
\looseness=-1

\paragraph{Sharing Efficiency} As discussed in Section~\ref{subsec:properties},
our approach is able to give static guarantees of optimal sharing to methods
that satisfy certain conditions.  To assess sharing for other methods, we
monitor the sharing in our benchmark executions.  Specifically, we collect
traces of which VBlocks are executed under which conditions and subsequently
analyze whether those traces were optimal, with regard to sharing.  For each
variational trace, we expand it into a set of all distinct concrete traces that
it represents, and then compute the alignment of these concrete traces. 
Since an optimal alignment of $n$ traces is NP-hard~\citep{Wang:1994jq}, 
we compute pairwise alignments between all distinct concrete executions using
Needleman-Wunsch algorithm~\cite{Needleman:1970gm}.  If the observed
variational trace is longer than the longest pairwise alignment, we consider
the sharing as not optimal. This pairwise approximation is conservative in that
we may consider executions with optimal sharing as not optimal if the n-way
alignment is longer than the longest pair-wise alignment; conversely, if the
variational trace is not longer than the longest pair-wise alignment we can be
sure that the sharing is optimal. Our pairwise alignment approach sidesteps the
need of computing optimal alignment, but it still has scalability issues if
there are too many pairs, which happen sometimes in our evaluation. For those
cases, we conservatively mark them as suboptimal.  \looseness=-1

\subsection{Execution Time}

\begin{table}[!tpb]
\footnotesize
\centering
\begin{tabular}{l r r r r r r r r r r}
	\toprule
	Subject & LOC & \#Opt & \#Config & $\mu\text{JVM}$ & $\max\text{JVM}$ &
	VarexJ & VarexC & VarexJ/& VarexJ/&VarexC/\\
		   &&&& (in ms) & (in ms) & (in ms) & (in ms) & VarexC & $\max$JVM & $\max$JVM\\
	\midrule
	Jetty & $145,421$ & $7$ & $128$ & $949$ & $1,246$ & $\mathbf{166,340}$ & $\mathbf{4,660}$ & $36$x & $133$x & $4$x \\
	Checkstyle & $14,950$ & $141$ & $>2^{135}$ & $811$ & $946$ & $^{*}\mathbf{89,366}$ & $\mathbf{3,825}$ & $23$x & $94$x & $4$x\\
	Prevayler & $8,975$ & $8$ & $256$ & $13$ & $44$ & $\mathbf{33,124}$ & $\mathbf{725}$ & $46$x & $753$x & $16$x\\
	QuEval & $3,109$ & $23$ & $940$ & $0.03$ & $0.38$ & $2,354$ & $1,244$ & $2$x & $6,195$x & $3,274$x \\
	GPL & $662$ & $15$ & $146$ & $0.55$ & $6.23$ & $4,691$ & $479$ & $10$x & $753$x & $479$x\\
	Elevator & $730$ & $6$ & $20$ & $0.03$ & $0.07$ & $45$ & $7.88$ & $6$x & $643$x & $113$x\\
	E-Mail & $644$ & $9$ & $40$ & $0.02$ & $0.06$ & $21$ & $6.19$ & $3$x & $350$x & $103$x\\
	\bottomrule
\end{tabular}

\caption{\footnotesize 
	Statistics about benchmark programs and performance comparison among JVM, VarexJ and VarexC.
	Statistics include lines of code, number of (\texttt{boolean}) options, and number of valid configurations.
Numbers in bold denote the cases where VarexC or
VarexJ outperforms brute force execution. 
The last three columns denote the relative speedup or slowdown.
\\ {\tiny $^{*}$ 
Checkstyle contains a loop over a list of many optional elements. For VarexJ, we had to manually rewrite that loop to allow measurement (due to exponential behavior, we would run out of memory otherwise).}}
\label{tab:result}
\vspace*{-3.0em}
\end{table}

Table~\ref{tab:result} summarizes the performance results, showing
that VarexC outperforms VarexJ by a factor between 2 to 46. 
Variational execution is obviously significantly slower than executing
a single configuration (between 4 and 3200 times slower), but as
configuration spaces grow exponentially, this slowdown is often 
practical to cover the entire space.
\looseness=-1

\paragraph{VarexC vs. VarexJ} Comparing VarexC and VarexJ, we can see that
VarexC outperforms VarexJ in all cases, with a speedup of 2 to 46. To better
understand the speedup, it is useful to divide our subject programs into two
groups and discuss them separately.

QuEval, GPL, Elevator, E-mail are academic examples that only need
basic language features, such as arithmetic computation and array operations.
Thus, a comparison between  VarexC and VarexJ on these programs reveals the
performance gap between bytecode transformation and  interpreter
instrumentation. As we can see, we are up to 10
times faster than  VarexJ, due to lower interpreter overhead and JVM optimizations. 
QuEval is dominated primarily by heavy computations with arrays with only moderate sharing that are expensive in both VarexC and VarexJ.
In a micro-benchmark, we confirmed that sorting on an array of 1000
variational elements with VarexC is roughly 2 times
faster than VarexJ, which likely explains the low performance difference for this program. 
\looseness=-1

Jetty, Prevayler, Checkstyle are medium-sized real-world programs that are widely used
in practice. 
These programs use various more advanced JVM features, including
dynamic class loading (CheckStyle), network access (Jetty) and file access (Prevayler). Since VarexJ is built upon a research JVM,
it inherits limitations from its underlying JVM in this regard,
whereas code transformed with VarexC remains portable across JVMs. 

\paragraph{VarexC vs. Individual Executions} To investigate how useful
configuration-complete analyses are in practice, we compare VarexC (and for comparison also VarexJ) with the time it takes to execute individual configurations, both average configurations and worst-case configurations.

The overhead of variational execution is generally high, which is explained
both by the instrumentation overhead (creating and propagating conditional
values, boxing, control-flow indirections, SAT solving at runtime), and
by doing the additional work of executing all configurations.
The overhead is usually only justified for large configuration spaces, and
so VarexC (as VarexJ) outperforms the brute-force execution of all configurations only
for Jetty, CheckStyle, and Prevayler.

QuEval, GPL, Elevator, Email represent extreme cases where
variations are used heavily. As we can see from Table~\ref{tab:result}, up to
$940$ configurations are encoded in merely $3,109$ lines of code for QuEval.
When program variations (we called them features interchangeably) present
compactly, the sharing of data and execution becomes less frequent, and thus
explains  why VarexC and VarexJ cannot outperform brute force because
variational execution relies on sharing to be efficient. In fact, there is a
loop in Checkstyle that causes state space explosion for VarexJ because of
looping a list that has $2^{135}$ variants. VarexC uses a model class to handle
this loop gracefully, as discussed in Section~\ref{sec:optim}. However, unlike
these extreme cases, programs in practice often adopt separation of concerns and
thus features do not interact very heavily all the time~\cite{MWK+:ASE16}.

Jetty, Prevayler, Checkstyle implement configuration options such that
they are often orthogonal to each other or have relatively local effects,
which facilitates sharing better. As we can see in Table~\ref{tab:result}, by
exploiting sharing, the performance of VarexC for exploring the entire configuration space is even relatively close to executing only the slowest configuration, with a slowdown as small as a factor of 4.
\looseness=-1

\paragraph{Verdict} We argue that the runtime overhead of VarexC is reasonable
except for one case (QuEval) where expensive array operations with little
sharing dominate the performance.  Runtime overhead does increase for the cases
where interactions of variations are heavily used, but the overhead amortizes
quickly in large configuration spaces, which grow exponentially with the number
of options, unless all options interact. More importantly, research shows that
interactions do not increase with the worst-case exponential behavior in most
cases~\cite{MWK+:ASE16,RSM+:ICSE10}. Even the academic programs that are
designed to interact heavily are still well-behaved with plenty of sharing
despite many interactions. Finally, we argue that the overhead is worthwhile if
we consider the ability to identify all interactions among all options, for which 
the alternative is sampling only a small set of configurations.

\textit{In summary, VarexC outperforms VarexJ with a speedup of 2 to 46 times. The performance gain
comes from various factors, including further optimizations at low level and
portability to mature JVM implementations, all of which benefit from our
strategy of transforming bytecode instead of modifying a language interpreter.
Moreover, VarexC is performant and efficient for practical use in analyzing the
whole configuration space of programs.}

\subsection{Memory Usage}

\begin{table}[!tpb]
\footnotesize
\centering
\begin{tabular}{l r r r r r r}
	\toprule
	Subject & JVM &
	VarexJ & VarexC & VarexJ/ & VarexJ/ & VarexC/\\
	 & (in MB) & (in MB) & (in MB) & VarexC & JVM & JVM \\
	\midrule
	Jetty & $268$ & $2,739$ & $648$ & $4.2$ & $10.2$ & $2.4$\\
	Checkstyle & $504$ & $1,106$ & $835$ & $1.3$ & $2.2$ & $1.7$\\
	Prevayler & $65$ & $1,378$ & $288$ & $4.8$ & $21.1$ & $4.4$\\
	QuEval & $59$ & $301$ & $282$ & $1.1$ & $5.1$ & $4.8$\\
	GPL & $141$ & $342$ & $151$ & $2.3$ & $2.4$ & $1.1$\\
	Elevator & $58$ & $92$ & $67$ & $1.4$ & $1.6$ & $1.2$\\
	E-Mail & $59$ & $67$ & $67$ & $1.0$ & $1.1$ & $1.1$\\
	\bottomrule
\end{tabular}

\caption{\footnotesize Memory usage comparison of JVM, VarexJ and VarexC.
}
\vspace*{-2em}
\label{tab:memory}
\end{table}

Table~\ref{tab:memory} summarizes the memory usage results, showing that VarexC
is more memory efficient than VarexJ in all cases
except for a tie in E-Mail. Conceptually, VarexC and VarexJ perform a similar
computation, so the extra memory consumed by VarexJ could result from two main
aspects: less efficient sharing in data and the overhead of the underlying
meta-circular interpreter. As the differences are fairly consistent across
benchmarks, we attribute most efficiency gains to the interpreter's overhead
rather than to differences in sharing. 
Both VarexC and VarexJ, as
expected, consume more memory when compared to the execution of a single
configuration, with the gaps noticeably smaller for VarexC. The memory overhead
of VarexC largely comes from analyzing other configurations. We argue that the
extra memory overhead shown in Table~\ref{tab:memory} is acceptable for modern
machines.

\textit{In summary, VarexC is more memory efficient than VarexJ, due to more efficient sharing in data 
and less overhead from the implementation. Moreover, VarexC has the memory efficiency to 
analyze the entire configuration space in practice.}

\subsection{Sharing Efficiency}

\begin{table}[!tpb]
\footnotesize
\centering

\begin{tabular}{lrrrrr}
	\toprule
			& \multicolumn{2}{c}{Method analysis (static)} & \multicolumn{3}{c}{Method execution (dynamic)} \\ \cmidrule(r){2-3} \cmidrule(l){4-6}
	 		& Guaranteed & No        & Guaranteed & Observed as& Observed as \\
	Subject	& Optimal    & Guarantee & Optimal    & Optimal & Non-Optimal \\
	\midrule
	Jetty & $2,667$ & $257$ & $19,043$ & $3,734$ & $0$\\
	Checkstyle & $2,878$ & $281$ & $1,992,879$ & $268,689$ & $^{*}230$\\
	Prevayler & $722$ & $108$ & $58,274$ & $5,036$ & $0$\\
	QuEval & $458$ & $103$ & $57,383$ & $8,920$ & $267$\\
	GPL & $244$ & $44$ & $34,641$ & $3,476$ & $0$\\
	Elevator & $119$ & $13$ & $2,453$ & $218$ & $1$\\
	E-Mail & $314$ & $40$ & $2,264$ & $120$ & $0$\\
	\bottomrule
	Total	& $7,402$ & $846$ & $2,166,937$ & $290,193$ & $498$	\\
	\bottomrule
\end{tabular}
\caption{\footnotesize Sharing efficiency of VarexC. 
We analyze methods both statically and at runtime. 
At runtime, we distinguish between method executions that are statically guaranteed to be
optimal, that are dynamically observed to be optimal, and that are dynamically observed
to be not optimal.
\\ {\tiny $^{*}$ 
Our alignment analysis has scalability issues with some variational traces of Checkstyle, mainly because 
there are too many features (up to 130) in each single trace, resulting into too
expensive pairwise alignment. 
For those variational traces, we conservatively report them as non-optimal.
}}
\vspace*{-2em}
\label{tab:sharing}
\end{table}

Table~\ref{tab:sharing} shows how efficient our sharing of VBlocks is in
practice. As we can see, we can make static guarantees for $89.7$~percent of all the
methods in our benchmark programs. When observing the executions, those 
methods with static guarantees account for $88.2$~percent of the executed methods,
and we observed that $99.8$ percent for the remaining ones were
optimal as well. The number of method executions that redundantly execute
VBlocks with suboptimal sharing is minimal.


\textit{In summary, sharing in VarexC is efficient, with static guarantees to $89.7\%$ of all
methods. For methods with no static guarantees, VarexC achieves runtime
optimality for $99.8\%$ of those method invocations.}
\looseness=-1

\section{Related Work}
\label{related}

We implement variational execution by transforming bytecode.

\paragraph{Variational execution}
Variational execution is a technique to execute a program for
different values while sharing common computations as far as possible.
It has similarities with model checking and symbolic execution,
but performs concrete executions, where multiple concrete values
are distinguished with conditions external to the program,
and focuses on maximizing sharing during the execution by storing
variations in data locally and by aggressively merging control-flow
differences.
Variational execution has a number of existing and potential application scenarios 
in different lines of work. In each case, a program
shall be executed for many similar inputs, typically to observe the
similarities and differences among executions, often with the focus
on interactions among multiple differences.
\begin{itemize}[leftmargin=*]

	\item
	A common use case is testing configurable systems, in which a single
	test case should be executed over a large configuration space.
	For example, \citet{NKN:ICSE14} used variational execution to render
	the content of WordPress while controlling how various plugins interact
	and affect the execution; \citet{MWK+:ASE16} and \citet{KKB:ISSRE12} executed Java programs
	with configuration parameters (as used in our evaluation) to
	observe differences among different configurations.
	Given test cases to provide global or feature-specific specifications,
	variational execution can efficiently check such specifications by
	executing test cases over large configuration spaces~\cite{NKN:ICSE14,Kastner:2012jna,KKB:ISSRE12}.
	\citet{SMS+:vamos18} furthermore used differences among executions as
	clues to find suspicious feature
	interactions.
	\citet{RSM+:ICSE10} used symbolic execution to also detect feature interactions, which however required a lot of effort (80 machine weeks to symbolically execute 319 tests
	with less than 30 configuration options for 10KLOC programs) due to limited sharing abilities of symbolic execution~\cite{MWK+:ASE16}.
	\looseness=-1
	
	\item
	\citet{AF:POPL12} uses variational execution (under the name
	faceted execution) to track information flows in a program.
	In this context, the program is evaluated with sensitive and
	nonsensitive values at the same time, where the equivalent of
	options are decisions who is allowed to see which value.
	In contrast to prior multi-execution work which observes differences
	between two executions, Austin's analysis based on variational
	execution can track interactions among multiple decisions.
	This line of work has been extended with models for variational
	database storage~\cite{YHA+:PLDI16}.
	There are also libraries to enable developers to directly write
	variational programs for this information-flow analysis,
	rather than relying on a variational execution engine~\cite{Schmitz:2016be, AYF+:PLAS13,
schmitzFacetedSecure2018}.

	\item
	Variational execution can further be used to explain the differences
	in program executions among multiple inputs~\cite{2018arXiv}, in line with delta
	debugging~\cite{SZ:ICSE13, Z:FSE02, KKS+:ASPLOS16}.

	\item 
	Variational execution is potentially useful for approaches that
	speculatively change source code or execution to evaluate the consequences.
	For example, mutation testing~\cite{Jia:2011iz} and generate-and-validate 
	automatic program repair~\cite{LeGoues:ci}
	typically try many small changes to the source code and re-execute the
	test suite for each change to evaluate test suite quality or find patches. 
	\citet{Zhang:2007jv} speculatively
	switches predicates in program and re-executes the program to detect
	execution omission errors. \citet{YuriyBrun:2011ve} proactively merges
	different versions and repeatedly executes the test suite to detect
	collaboration conflicts early. By encoding changes as variations,
	variational execution can explore the effects of changes efficiently and uncover
	interesting interactions of changes~\cite{wongBeyond2018}.

	\item
	Finally, variational execution can be used to speed up similar
	computations if there is sufficient sharing to offset the overhead.
	For example, \citet{SBZ+:ICES11} shares similarities among executions
	of simulation workloads and computes with several values in parallel.
	\citet{yingfei:ISSTA18} shares executions of mutated programs with equivalence modulo states 
	in the same process
	and forks new processes only if there are differences in program states
	after executing mutated statements.
	\citet{TXZ09} executes patched and unpatched
	programs together to share redundant computations when testing a patch.
	Variational execution has the potential to scale such use cases to
	exploring interactions among multiple changes.

\end{itemize}

\begin{sloppypar}
Variational execution is fundamentally different from traditional
approaches of multi-execution~\cite{DeGroef:2012jc,DP:SP10, KLZ+:SP12, SAF:SOSP07, HC:ICSE13} and delta debugging~\cite{KKS+:ASPLOS16, Z:FSE02, SZ:ICSE13}
that execute programs repeatedly (either variants of the program
or the same program with different inputs) to compare those executions
to identify, for example, information-flow issues or causes of bugs.
These kinds of approaches execute programs repeatedly in parallel and align those executions
either afterward or through probes at specific points of the executions.
In contrast, variational execution exploits sharing and allows to observe
differences among executions during the execution.
\end{sloppypar}

Ideas similar to variational execution can be found also in approaches
for model checking and symbolic execution~\cite{ALM:SE08,SNG+:FSE,RARF:JPF11},
specifically concepts to store variations as local as possible
to increase sharing and facilitate joining.
Such tools can potentially be used for similar purposes when differences
among inputs are modeled as symbolic decisions, but all other inputs are
concrete. However, as \citet{MWK+:ASE16} has shown, current approaches
are less effective at sharing than the aggressive sharing in variational execution.
\looseness=-1

\paragraph{Implementing Variational Execution}
Existing variational execution approaches (and related approaches) are typically implemented
by modifying the execution engine~\cite{AF:POPL12,MWK+:ASE16,Kastner:2012jna,SNG+:FSE,KKB:ISSRE12, MB:USENIX12, BCD+:FORTE2012}, typically research prototypes or metacircular interpreters
that cause significant overhead and provide only limited support for all
language features. \citet{Schmitz:2016be,
schmitzFacetedSecure2018} provided a library for Haskell with which
users can directly implement programs to use variational execution,
similar to our example in Section~\ref{sec:motivation}.

Instead, we pursue an approach in which we transparently modify Java bytecode
to achieve variational execution on a commodity JVM. Our approach was
inspired by Phosphor~\cite{Bell:2014bt}, a dynamic taint
analysis for Java that tracks taints by instrumenting bytecode.
In contrast to Phosphor, our modifications are significantly more extensive,
as we need not only track additional data, but entirely change how
computations and control flow happen in the program.
CROCHET allows to explore different inputs to the same function by modifying
bytecode to perform checkpoints and rollbacks on the heap of a commodity
JVM~\cite{crochet}. Comparing to CROCHET, our approach can achieve a more
fine-grained sharing of executions while exploring different alternative values.
The only other approach to execute programs variationally with commodity
infrastructure is the implementation behind Jeeves~\cite{YHA+:PLDI16}, that uses metaprogramming
to achieve similar changes for a small subset of Python. Their
transformations are incomplete and not described beyond their implementation
for a small example program.


\paragraph{Quality assurance for configurable systems}
A main goal of variational execution is testing configurable systems.
There are a wide range of approaches to analyze configurable systems
with large configuration spaces,
typically focused on reusing test cases across product variants, on sampling and on static analysis~\cite{TAK+:CSUR14,MKRGA:ICSE16,NL:CSUR11,PBL05,ER:IST11}.
Sampling strategies analyze or execute a subset of configurations,
but such analysis is neither exhaustive nor does it allow to easily compare
executions~\cite{NL:CSUR11}. For static analyses (including type checking, model checking, and data-flow analysis), researchers have explored many sharing strategies to
encode variability locally (e.g., alternative types for expressions),
to reason about large configuration spaces with propositional formulas,
and to join computations early~\cite{TAK+:CSUR14,LKA+:ESECFSE13}. In a
sense, variational execution can be seen as a generalization of these
sharing techniques for an interpreter~\cite{Kastner:2012jna}.
\citet{BMB+:PLDI13} and \citet{DSB+:STTT17} describe how to lift existing static analyses by providing a variational framework on how
to execute them.

\section{Conclusions}
\label{conclusion}

While variational execution has been applied in different areas such as testing
highly configurable systems and tracking information flow, an efficient
implementation is still missing for practical use. In this work, we propose to
achieve variational execution by transforming programs at the bytecode level.
Our approach is transparent to the developers, and has various advantages such
as making use of underlying optimizations of the JVM and remaining portable to
different JVMs. Our approach transforms individual instructions and modifies
the control flow of methods to exploit sharing of common execution across
configurations. Even with aggressive modification to the control flow decisions,
we formally prove that our transformation to the control flow is correct for
all cases, and optimal for a large subset of cases. We further optimize our
implementation with two different optimizations, each of which optimizes our
approach from different aspects.  With an empirical evaluation on 7 highly
configurable systems, we show that our approach is 2 to 46 times faster while
saving up to 3 quarters of memory usage when compared to the state-of-the-art.
A monitoring at runtime further confirms that we achieve $99.8\%$ optimality
for the methods that we cannot guarantee optimal sharing. Overall, our results
indicate that our approach is useful for analyzing highly configurable systems
in practice.

\begin{acks}

	This work has been supported in part by the NSF (awards
	1318808, 1552944, and 1717022) and AFRL and DARPA (FA8750-16-2-0042).
	Lazarek was supported through Carnegie Mellon's Research Experiences for
	Undergraduates in Software Engineering. We thank Jonathan Bell for his
	advice on bytecode transformation.
	
\end{acks}


%
\bibliographystyle{ACM-Reference-Format}
\bibliography{lit/MYfull,my,lit/literature}  


\begin{thebibliography}{60}


\ifx \showCODEN    \undefined \def \showCODEN     #1{\unskip}     \fi
\ifx \showDOI      \undefined \def \showDOI       #1{#1}\fi
\ifx \showISBNx    \undefined \def \showISBNx     #1{\unskip}     \fi
\ifx \showISBNxiii \undefined \def \showISBNxiii  #1{\unskip}     \fi
\ifx \showISSN     \undefined \def \showISSN      #1{\unskip}     \fi
\ifx \showLCCN     \undefined \def \showLCCN      #1{\unskip}     \fi
\ifx \shownote     \undefined \def \shownote      #1{#1}          \fi
\ifx \showarticletitle \undefined \def \showarticletitle #1{#1}   \fi
\ifx \showURL      \undefined \def \showURL       {\relax}        \fi
\providecommand\bibfield[2]{#2}
\providecommand\bibinfo[2]{#2}
\providecommand\natexlab[1]{#1}
\providecommand\showeprint[2][]{arXiv:#2}

\bibitem[\protect\citeauthoryear{Austin and Flanagan}{Austin and
  Flanagan}{2012}]%
        {AF:POPL12}
\bibfield{author}{\bibinfo{person}{Thomas~H. Austin} {and}
  \bibinfo{person}{Cormac Flanagan}.} \bibinfo{year}{2012}\natexlab{}.
\newblock \showarticletitle{{Multiple Facets for Dynamic Information Flow}}. In
  \bibinfo{booktitle}{\emph{{Proceedings of the ACM SIGPLAN-SIGACT Symposium on
  Principles of Programming Languages (POPL)}}}. \bibinfo{publisher}{ACM},
  \bibinfo{pages}{165--178}.
\newblock


\bibitem[\protect\citeauthoryear{Austin, Yang, Flanagan, and
  Solar-Lezama}{Austin et~al\mbox{.}}{2013}]%
        {AYF+:PLAS13}
\bibfield{author}{\bibinfo{person}{Thomas~H. Austin}, \bibinfo{person}{Jean
  Yang}, \bibinfo{person}{Cormac Flanagan}, {and} \bibinfo{person}{Armando
  Solar-Lezama}.} \bibinfo{year}{2013}\natexlab{}.
\newblock \showarticletitle{{Faceted Execution of Policy-Agnostic Programs}}.
  In \bibinfo{booktitle}{\emph{Proceedings of the ACM SIGPLAN Workshop on
  Programming Languages and Analysis for Security (PLAS)}}. ACM,
  \bibinfo{pages}{15--26}.
\newblock


\bibitem[\protect\citeauthoryear{Baldoni, Coppa, D'Elia, Demetrescu, and
  Finocchi}{Baldoni et~al\mbox{.}}{2018}]%
        {Baldoni:2018kg}
\bibfield{author}{\bibinfo{person}{Roberto Baldoni}, \bibinfo{person}{Emilio
  Coppa}, \bibinfo{person}{Daniele~Cono D'Elia}, \bibinfo{person}{Camil
  Demetrescu}, {and} \bibinfo{person}{Irene Finocchi}.}
  \bibinfo{year}{2018}\natexlab{}.
\newblock \showarticletitle{A Survey of Symbolic Execution Techniques}.
\newblock \bibinfo{journal}{\emph{{ACM} Computing Surveys (CSUR)}}
  \bibinfo{volume}{51}, \bibinfo{number}{3} (\bibinfo{year}{2018}),
  \bibinfo{pages}{50:1--50:39}.
\newblock


\bibitem[\protect\citeauthoryear{Barthe, Crespo, Devriese, Piessens, and
  Rivas}{Barthe et~al\mbox{.}}{2012}]%
        {BCD+:FORTE2012}
\bibfield{author}{\bibinfo{person}{Gilles Barthe}, \bibinfo{person}{Juan~Manuel
  Crespo}, \bibinfo{person}{Dominique Devriese}, \bibinfo{person}{Frank
  Piessens}, {and} \bibinfo{person}{Exequiel Rivas}.}
  \bibinfo{year}{2012}\natexlab{}.
\newblock \showarticletitle{{Secure Multi-Execution through Static Program
  Transformation}}.
\newblock In \bibinfo{booktitle}{\emph{Formal Techniques for Distributed
  Systems}}. \bibinfo{publisher}{Springer}, \bibinfo{pages}{186--202}.
\newblock


\bibitem[\protect\citeauthoryear{Bell and Kaiser}{Bell and Kaiser}{2014}]%
        {Bell:2014bt}
\bibfield{author}{\bibinfo{person}{Jonathan Bell} {and}
  \bibinfo{person}{Gail~E. Kaiser}.} \bibinfo{year}{2014}\natexlab{}.
\newblock \showarticletitle{{Phosphor: Illuminating Dynamic Data Flow in
  Commodity JVMs}}. In \bibinfo{booktitle}{\emph{Proceedings of the {ACM}
  International Conference on Object Oriented Programming Systems Languages and
  Applications (OOPSLA)}}. {ACM}, \bibinfo{pages}{83--101}.
\newblock


\bibitem[\protect\citeauthoryear{Bell and Pina}{Bell and Pina}{2018}]%
        {crochet}
\bibfield{author}{\bibinfo{person}{Jonathan Bell} {and}
  \bibinfo{person}{Lu{\'i}s Pina}.} \bibinfo{year}{2018}\natexlab{}.
\newblock \showarticletitle{{CROCHET: Checkpoint and Rollback via Lightweight
  Heap Traversal on Stock JVMs}}. In \bibinfo{booktitle}{\emph{Proceedings of
  the European Conference on Object-Oriented Programming (ECOOP)}}.
  \bibinfo{publisher}{Schloss Dagstuhl--Leibniz-Zentrum fuer Informatik},
  \bibinfo{pages}{17:1--17:31}.
\newblock


\bibitem[\protect\citeauthoryear{Bodden, Tol\^{e}do, Ribeiro, Brabrand, Borba,
  and Mezini}{Bodden et~al\mbox{.}}{2013}]%
        {BMB+:PLDI13}
\bibfield{author}{\bibinfo{person}{Eric Bodden}, \bibinfo{person}{T\'{a}rsis
  Tol\^{e}do}, \bibinfo{person}{M\'{a}rcio Ribeiro}, \bibinfo{person}{Claus
  Brabrand}, \bibinfo{person}{Paulo Borba}, {and} \bibinfo{person}{Mira
  Mezini}.} \bibinfo{year}{2013}\natexlab{}.
\newblock \showarticletitle{{SPLLIFT: Statically Analyzing Software Product
  Lines in Minutes Instead of Years}}. In \bibinfo{booktitle}{\emph{Proceedings
  of the ACM SIGPLAN Conference on Programming Language Design and
  Implementation (PLDI)}}. \bibinfo{publisher}{ACM}, \bibinfo{pages}{355--364}.
\newblock


\bibitem[\protect\citeauthoryear{Brun, Holmes, Ernst, and Notkin}{Brun
  et~al\mbox{.}}{2011}]%
        {YuriyBrun:2011ve}
\bibfield{author}{\bibinfo{person}{Yuriy Brun}, \bibinfo{person}{Reid Holmes},
  \bibinfo{person}{Michael~D. Ernst}, {and} \bibinfo{person}{David Notkin}.}
  \bibinfo{year}{2011}\natexlab{}.
\newblock \showarticletitle{{Proactive Detection of Collaboration Conflicts}}.
  In \bibinfo{booktitle}{\emph{Proceedings of the European Software Engineering
  Conference and the {ACM} {SIGSOFT} Symposium on the Foundations of Software
  Engineering {(ESEC/FSE)}}}. \bibinfo{publisher}{ACM},
  \bibinfo{pages}{168--178}.
\newblock


\bibitem[\protect\citeauthoryear{Calder, Kolberg, Magill, and
  Reiff-Marganiec}{Calder et~al\mbox{.}}{2003}]%
        {Calder:2003hra}
\bibfield{author}{\bibinfo{person}{Muffy Calder}, \bibinfo{person}{Mario
  Kolberg}, \bibinfo{person}{Evan~H. Magill}, {and} \bibinfo{person}{Stephan
  Reiff-Marganiec}.} \bibinfo{year}{2003}\natexlab{}.
\newblock \showarticletitle{{Feature Interaction: A Critical Review and
  Considered Forecast}}.
\newblock \bibinfo{journal}{\emph{Computer Networks}} \bibinfo{volume}{41},
  \bibinfo{number}{1} (\bibinfo{year}{2003}), \bibinfo{pages}{115--141}.
\newblock


\bibitem[\protect\citeauthoryear{d'Amorim, Lauterburg, and Marinov}{d'Amorim
  et~al\mbox{.}}{2008}]%
        {ALM:SE08}
\bibfield{author}{\bibinfo{person}{Marcelo d'Amorim}, \bibinfo{person}{Steven
  Lauterburg}, {and} \bibinfo{person}{Darko Marinov}.}
  \bibinfo{year}{2008}\natexlab{}.
\newblock \showarticletitle{{Delta Execution for Efficient State-Space
  Exploration of Object-Oriented Programs}}.
\newblock \bibinfo{journal}{\emph{IEEE Transactions on Software Engineering
  (TSE)}} \bibinfo{volume}{34}, \bibinfo{number}{5} (\bibinfo{year}{2008}),
  \bibinfo{pages}{597--613}.
\newblock


\bibitem[\protect\citeauthoryear{De~Groef, Devriese, Nikiforakis, and
  Piessens}{De~Groef et~al\mbox{.}}{2012}]%
        {DeGroef:2012jc}
\bibfield{author}{\bibinfo{person}{Willem De~Groef}, \bibinfo{person}{Dominique
  Devriese}, \bibinfo{person}{Nick Nikiforakis}, {and} \bibinfo{person}{Frank
  Piessens}.} \bibinfo{year}{2012}\natexlab{}.
\newblock \showarticletitle{{FlowFox: A Web Browser with Flexible and Precise
  Information Flow Control}}. In \bibinfo{booktitle}{\emph{{Proceedings of the
  ACM Conference on Computer and Communications Security (CCS)}}}.
  \bibinfo{publisher}{ACM}, \bibinfo{pages}{748--759}.
\newblock


\bibitem[\protect\citeauthoryear{Devriese and Piessens}{Devriese and
  Piessens}{2010}]%
        {DP:SP10}
\bibfield{author}{\bibinfo{person}{Dominique Devriese} {and}
  \bibinfo{person}{Frank Piessens}.} \bibinfo{year}{2010}\natexlab{}.
\newblock \showarticletitle{{Noninterference Through Secure Multi-Execution}}.
  In \bibinfo{booktitle}{\emph{Proceedings of the IEEE Symposium on Security
  and Privacy (SP)}}. IEEE, \bibinfo{pages}{109--124}.
\newblock


\bibitem[\protect\citeauthoryear{Dimovski, Al-Sibahi, and Brabrand}{Dimovski
  et~al\mbox{.}}{2017}]%
        {DSB+:STTT17}
\bibfield{author}{\bibinfo{person}{Aleksandar~S. Dimovski},
  \bibinfo{person}{Ahmad~Salim Al-Sibahi}, {and} \bibinfo{person}{Andrzej
  Brabrand, Clausand~W{\k{a}}sowski}.} \bibinfo{year}{2017}\natexlab{}.
\newblock \showarticletitle{{Efficient Family-Based Model Checking via
  Variability Abstractions}}.
\newblock \bibinfo{journal}{\emph{International Journal on Software Tools for
  Technology Transfer (STTT)}} \bibinfo{volume}{19}, \bibinfo{number}{5}
  (\bibinfo{year}{2017}), \bibinfo{pages}{585--603}.
\newblock


\bibitem[\protect\citeauthoryear{Engstr\"{o}m and Runeson}{Engstr\"{o}m and
  Runeson}{2011}]%
        {ER:IST11}
\bibfield{author}{\bibinfo{person}{Emelie Engstr\"{o}m} {and}
  \bibinfo{person}{Per Runeson}.} \bibinfo{year}{2011}\natexlab{}.
\newblock \showarticletitle{{Software Product Line Testing - A Systematic
  Mapping Study}}.
\newblock \bibinfo{journal}{\emph{Information and Software Technology}}
  \bibinfo{volume}{53}, \bibinfo{number}{1} (\bibinfo{year}{2011}),
  \bibinfo{pages}{2--13}.
\newblock


\bibitem[\protect\citeauthoryear{Erwig and Walkingshaw}{Erwig and
  Walkingshaw}{2013}]%
        {EW:GTTSE13}
\bibfield{author}{\bibinfo{person}{Martin Erwig} {and} \bibinfo{person}{Eric
  Walkingshaw}.} \bibinfo{year}{2013}\natexlab{}.
\newblock \showarticletitle{{Variation Programming with the Choice Calculus}}.
  In \bibinfo{booktitle}{\emph{Generative and Transformational Techniques in
  Software Engineering IV (GTTSE)}}. \bibinfo{publisher}{Springer},
  \bibinfo{pages}{55--100}.
\newblock


\bibitem[\protect\citeauthoryear{Georges, Buytaert, and Eeckhout}{Georges
  et~al\mbox{.}}{2007}]%
        {Georges:2007br}
\bibfield{author}{\bibinfo{person}{Andy Georges}, \bibinfo{person}{Dries
  Buytaert}, {and} \bibinfo{person}{Lieven Eeckhout}.}
  \bibinfo{year}{2007}\natexlab{}.
\newblock \showarticletitle{{Statistically Rigorous Java Performance
  Evaluation}}. In \bibinfo{booktitle}{\emph{{Proceedings of the ACM SIGPLAN
  Conference on Object-Oriented Programming Systems and Applications
  (OOPSLA)}}}. \bibinfo{publisher}{ACM}, \bibinfo{pages}{57--76}.
\newblock


\bibitem[\protect\citeauthoryear{Havelund and Pressburger}{Havelund and
  Pressburger}{2000}]%
        {Havelund:2000ed}
\bibfield{author}{\bibinfo{person}{Klaus Havelund} {and}
  \bibinfo{person}{Thomas Pressburger}.} \bibinfo{year}{2000}\natexlab{}.
\newblock \showarticletitle{{Model Checking JAVA Programs using JAVA
  PathFinder}}.
\newblock \bibinfo{journal}{\emph{International Journal on Software Tools for
  Technology Transfer (STTT)}} \bibinfo{volume}{2}, \bibinfo{number}{4}
  (\bibinfo{year}{2000}), \bibinfo{pages}{366--381}.
\newblock


\bibitem[\protect\citeauthoryear{Hosek and Cadar}{Hosek and Cadar}{2013}]%
        {HC:ICSE13}
\bibfield{author}{\bibinfo{person}{Petr Hosek} {and} \bibinfo{person}{Cristian
  Cadar}.} \bibinfo{year}{2013}\natexlab{}.
\newblock \showarticletitle{{Safe Software Updates via Multi-Version
  Execution}}. In \bibinfo{booktitle}{\emph{Proceedings of the International
  Conference on Software Engineering (ICSE)}}. \bibinfo{publisher}{IEEE},
  \bibinfo{pages}{612--621}.
\newblock


\bibitem[\protect\citeauthoryear{Jia and Harman}{Jia and Harman}{2011}]%
        {Jia:2011iz}
\bibfield{author}{\bibinfo{person}{Yue Jia} {and} \bibinfo{person}{Mark
  Harman}.} \bibinfo{year}{2011}\natexlab{}.
\newblock \showarticletitle{An Analysis and Survey of the Development of
  Mutation Testing}.
\newblock \bibinfo{journal}{\emph{{IEEE Transactions on Software Engineering
  (TSE)}}} \bibinfo{volume}{37}, \bibinfo{number}{5} (\bibinfo{year}{2011}),
  \bibinfo{pages}{649--678}.
\newblock


\bibitem[\protect\citeauthoryear{K{\"a}stner}{K{\"a}stner}{2017}]%
        {K:17}
\bibfield{author}{\bibinfo{person}{Christian K{\"a}stner}.}
  \bibinfo{year}{2017}\natexlab{}.
\newblock \bibinfo{booktitle}{\emph{{Differential Testing for Variational
  Analyses: Experience from Developing KConfigReader}}}.
\newblock \bibinfo{type}{{T}echnical {R}eport} 1706.09357.
  \bibinfo{institution}{arXiv}.
\newblock


\bibitem[\protect\citeauthoryear{K{\"a}stner, von Rhein, Erdweg, Pusch, Apel,
  Rendel, and Ostermann}{K{\"a}stner et~al\mbox{.}}{2012}]%
        {Kastner:2012jna}
\bibfield{author}{\bibinfo{person}{Christian K{\"a}stner},
  \bibinfo{person}{Alexander von Rhein}, \bibinfo{person}{Sebastian Erdweg},
  \bibinfo{person}{Jonas Pusch}, \bibinfo{person}{Sven Apel},
  \bibinfo{person}{Tillmann Rendel}, {and} \bibinfo{person}{Klaus Ostermann}.}
  \bibinfo{year}{2012}\natexlab{}.
\newblock \showarticletitle{{Toward Variability-Aware Testing}}. In
  \bibinfo{booktitle}{\emph{{Proceedings of the International Workshop on
  Feature-Oriented Software Development (FOSD)}}}. \bibinfo{publisher}{ACM},
  \bibinfo{pages}{1--8}.
\newblock


\bibitem[\protect\citeauthoryear{Kim, Khurshid, and Batory}{Kim
  et~al\mbox{.}}{2012}]%
        {KKB:ISSRE12}
\bibfield{author}{\bibinfo{person}{Chang Hwan~Peter Kim},
  \bibinfo{person}{Sarfraz Khurshid}, {and} \bibinfo{person}{Don Batory}.}
  \bibinfo{year}{2012}\natexlab{}.
\newblock \showarticletitle{{Shared Execution for Efficiently Testing Product
  Lines}}. In \bibinfo{booktitle}{\emph{Proceedings of the {IEEE} International
  Symposium on Software Reliability Engineering (ISSRE)}}.
  \bibinfo{publisher}{IEEE}, \bibinfo{pages}{221--230}.
\newblock


\bibitem[\protect\citeauthoryear{Kim, Kwon, Sumner, Zhang, and Xu}{Kim
  et~al\mbox{.}}{2015}]%
        {KKS+:ASPLOS15}
\bibfield{author}{\bibinfo{person}{Dohyeong Kim}, \bibinfo{person}{Yonghwi
  Kwon}, \bibinfo{person}{William~N. Sumner}, \bibinfo{person}{Xiangyu Zhang},
  {and} \bibinfo{person}{Dongyan Xu}.} \bibinfo{year}{2015}\natexlab{}.
\newblock \showarticletitle{{Dual Execution for On the Fly Fine Grained
  Execution Comparison}}. In \bibinfo{booktitle}{\emph{Proceedings of the
  International Conference on Architectural Support for Programming Languages
  and Operating Systems (ASPLOS)}}. \bibinfo{publisher}{ACM},
  \bibinfo{pages}{325--338}.
\newblock


\bibitem[\protect\citeauthoryear{Kolbitsch, Livshits, Zorn, and
  Seifert}{Kolbitsch et~al\mbox{.}}{2012}]%
        {KLZ+:SP12}
\bibfield{author}{\bibinfo{person}{Clemens Kolbitsch},
  \bibinfo{person}{Benjamin Livshits}, \bibinfo{person}{Benjamin Zorn}, {and}
  \bibinfo{person}{Christian Seifert}.} \bibinfo{year}{2012}\natexlab{}.
\newblock \showarticletitle{{Rozzle: De-Cloaking Internet Malware}}. In
  \bibinfo{booktitle}{\emph{Proceedings of the IEEE Symposium on Security and
  Privacy (SP)}}. \bibinfo{publisher}{IEEE}, \bibinfo{pages}{443--457}.
\newblock


\bibitem[\protect\citeauthoryear{Kwon, Kim, Sumner, Kim, Saltaformaggio, Zhang,
  and Xu}{Kwon et~al\mbox{.}}{2016}]%
        {KKS+:ASPLOS16}
\bibfield{author}{\bibinfo{person}{Yonghwi Kwon}, \bibinfo{person}{Dohyeong
  Kim}, \bibinfo{person}{William~Nick Sumner}, \bibinfo{person}{Kyungtae Kim},
  \bibinfo{person}{Brendan Saltaformaggio}, \bibinfo{person}{Xiangyu Zhang},
  {and} \bibinfo{person}{Dongyan Xu}.} \bibinfo{year}{2016}\natexlab{}.
\newblock \showarticletitle{{LDX: Causality Inference by Lightweight Dual
  Execution}}. In \bibinfo{booktitle}{\emph{Proceedings of the International
  Conference on Architectural Support for Programming Languages and Operating
  Systems (ASPLOS)}}. \bibinfo{publisher}{ACM}, \bibinfo{pages}{503--515}.
\newblock


\bibitem[\protect\citeauthoryear{Lazarek}{Lazarek}{2017}]%
        {Lazarek:2017km}
\bibfield{author}{\bibinfo{person}{Lukas Lazarek}.}
  \bibinfo{year}{2017}\natexlab{}.
\newblock \showarticletitle{{How to Efficiently Process {$2^{100}$} List
  Variations}}. In \bibinfo{booktitle}{\emph{Proceedings Companion of the ACM
  SIGPLAN International Conference on Systems, Programming, Languages, and
  Applications: Software for Humanity (SPLASH)}}. \bibinfo{publisher}{ACM},
  \bibinfo{pages}{36--38}.
\newblock


\bibitem[\protect\citeauthoryear{{Le Goues}, Nguyen, Forrest, and Weimer}{{Le
  Goues} et~al\mbox{.}}{2012}]%
        {LeGoues:ci}
\bibfield{author}{\bibinfo{person}{Claire {Le Goues}}, \bibinfo{person}{ThanhVu
  Nguyen}, \bibinfo{person}{Stephanie Forrest}, {and} \bibinfo{person}{Westley
  Weimer}.} \bibinfo{year}{2012}\natexlab{}.
\newblock \showarticletitle{GenProg: {A} Generic Method for Automatic Software
  Repair}.
\newblock \bibinfo{journal}{\emph{{IEEE Transactions on Software Engineering
  (TSE)}}} \bibinfo{volume}{38}, \bibinfo{number}{1} (\bibinfo{year}{2012}),
  \bibinfo{pages}{54--72}.
\newblock


\bibitem[\protect\citeauthoryear{Liebig, von Rhein, K{\"a}stner, Apel,
  D{\"o}rre, and Lengauer}{Liebig et~al\mbox{.}}{2013}]%
        {LKA+:ESECFSE13}
\bibfield{author}{\bibinfo{person}{J{\"o}rg Liebig}, \bibinfo{person}{Alexander
  von Rhein}, \bibinfo{person}{Christian K{\"a}stner}, \bibinfo{person}{Sven
  Apel}, \bibinfo{person}{Jens D{\"o}rre}, {and} \bibinfo{person}{Christian
  Lengauer}.} \bibinfo{year}{2013}\natexlab{}.
\newblock \showarticletitle{{Scalable Analysis of Variable Software}}. In
  \bibinfo{booktitle}{\emph{Proceedings of the Joint Meeting on Foundations of
  Software Engineering (ESEC/FSE)}}. \bibinfo{publisher}{ACM},
  \bibinfo{pages}{81--91}.
\newblock


\bibitem[\protect\citeauthoryear{Lillack, K{\"{a}}stner, and Bodden}{Lillack
  et~al\mbox{.}}{2014}]%
        {LKB:ASE14}
\bibfield{author}{\bibinfo{person}{Max Lillack}, \bibinfo{person}{Christian
  K{\"{a}}stner}, {and} \bibinfo{person}{Eric Bodden}.}
  \bibinfo{year}{2014}\natexlab{}.
\newblock \showarticletitle{{Tracking Load-Time Configuration Options}}. In
  \bibinfo{booktitle}{\emph{Proceedings of the {ACM/IEEE} International
  Conference on Automated Software Engineering (ASE)}}.
  \bibinfo{publisher}{ACM}, \bibinfo{pages}{445--456}.
\newblock


\bibitem[\protect\citeauthoryear{Lindholm, Yellin, Bracha, and
  Buckley}{Lindholm et~al\mbox{.}}{2015}]%
        {Buckley:2015tu}
\bibfield{author}{\bibinfo{person}{Tim Lindholm}, \bibinfo{person}{Frank
  Yellin}, \bibinfo{person}{Gilad Bracha}, {and} \bibinfo{person}{Alex
  Buckley}.} \bibinfo{year}{2015}\natexlab{}.
\newblock \bibinfo{booktitle}{\emph{{The Java{\textregistered} Virtual Machine
  Specification}} (\bibinfo{edition}{1} ed.)}.
\newblock \bibinfo{publisher}{{Addison-Wesley Professional}}.
\newblock


\bibitem[\protect\citeauthoryear{Maurer and Brumley}{Maurer and
  Brumley}{2012}]%
        {MB:USENIX12}
\bibfield{author}{\bibinfo{person}{Matthew Maurer} {and} \bibinfo{person}{David
  Brumley}.} \bibinfo{year}{2012}\natexlab{}.
\newblock \showarticletitle{{Tachyon: Tandem Execution for Efficient Live Patch
  Testing}}. In \bibinfo{booktitle}{\emph{Proceedings of the {USENIX} Security
  Symposium}}. \bibinfo{publisher}{{USENIX}}, \bibinfo{pages}{617--630}.
\newblock


\bibitem[\protect\citeauthoryear{Medeiros, K{\"a}stner, Ribeiro, Gheyi, and
  Apel}{Medeiros et~al\mbox{.}}{2016}]%
        {MKRGA:ICSE16}
\bibfield{author}{\bibinfo{person}{Fl{\'a}vio Medeiros},
  \bibinfo{person}{Christian K{\"a}stner}, \bibinfo{person}{M{\'a}rcio
  Ribeiro}, \bibinfo{person}{Rohit Gheyi}, {and} \bibinfo{person}{Sven Apel}.}
  \bibinfo{year}{2016}\natexlab{}.
\newblock \showarticletitle{{A Comparison of 10 Sampling Algorithms for
  Configurable Systems}}. In \bibinfo{booktitle}{\emph{Proceedings of the
  International Conference on Software Engineering (ICSE)}}.
  \bibinfo{publisher}{ACM}, \bibinfo{pages}{664--675}.
\newblock


\bibitem[\protect\citeauthoryear{Medeiros, K{\"a}stner, Ribeiro, Nadi, and
  Gheyi}{Medeiros et~al\mbox{.}}{2015}]%
        {MKR+2015}
\bibfield{author}{\bibinfo{person}{Fl{\'a}vio Medeiros},
  \bibinfo{person}{Christian K{\"a}stner}, \bibinfo{person}{M{\'a}rcio
  Ribeiro}, \bibinfo{person}{Sarah Nadi}, {and} \bibinfo{person}{Rohit Gheyi}.}
  \bibinfo{year}{2015}\natexlab{}.
\newblock \showarticletitle{{The Love/Hate Relationship with the {C}
  Preprocessor: An Interview Study}}. In \bibinfo{booktitle}{\emph{Proceedings
  of the European Conference on Object-Oriented Programming (ECOOP)}},
  Vol.~\bibinfo{volume}{37}. \bibinfo{publisher}{Schloss
  Dagstuhl--Leibniz-Zentrum fuer Informatik}, \bibinfo{pages}{495--518}.
\newblock


\bibitem[\protect\citeauthoryear{Meinicke, Wong, K{\"a}stner, and
  Saake}{Meinicke et~al\mbox{.}}{2018}]%
        {2018arXiv}
\bibfield{author}{\bibinfo{person}{Jens Meinicke}, \bibinfo{person}{Chu-Pan
  Wong}, \bibinfo{person}{Christian K{\"a}stner}, {and} \bibinfo{person}{Gunter
  Saake}.} \bibinfo{year}{2018}\natexlab{}.
\newblock \bibinfo{booktitle}{\emph{{Understanding Differences among Executions
  with Variational Traces}}}.
\newblock \bibinfo{type}{{T}echnical {R}eport} 1807.03837.
  \bibinfo{institution}{arXiv}.
\newblock


\bibitem[\protect\citeauthoryear{Meinicke, Wong, K{\"{a}}stner, Th{\"{u}}m, and
  Saake}{Meinicke et~al\mbox{.}}{2016}]%
        {MWK+:ASE16}
\bibfield{author}{\bibinfo{person}{Jens Meinicke}, \bibinfo{person}{Chu-Pan
  Wong}, \bibinfo{person}{Christian K{\"{a}}stner}, \bibinfo{person}{Thomas
  Th{\"{u}}m}, {and} \bibinfo{person}{Gunter Saake}.}
  \bibinfo{year}{2016}\natexlab{}.
\newblock \showarticletitle{{On Essential Configuration Complexity: Measuring
  Interactions in Highly-Configurable Systems}}. In
  \bibinfo{booktitle}{\emph{Proceedings of the IEEE/ACM International
  Conference on Automated Software Engineering (ASE)}}.
  \bibinfo{publisher}{ACM}, \bibinfo{pages}{483--494}.
\newblock


\bibitem[\protect\citeauthoryear{Meng, Meinicke, Wong, Walkingshaw, and
  K{\"a}stner}{Meng et~al\mbox{.}}{2017}]%
        {Meng:2017bx}
\bibfield{author}{\bibinfo{person}{Meng Meng}, \bibinfo{person}{Jens Meinicke},
  \bibinfo{person}{Chu-Pan Wong}, \bibinfo{person}{Eric Walkingshaw}, {and}
  \bibinfo{person}{Christian K{\"a}stner}.} \bibinfo{year}{2017}\natexlab{}.
\newblock \showarticletitle{{A Choice of Variational Stacks: Exploring
  Variational Data Structures}}. In \bibinfo{booktitle}{\emph{Proceedings of
  the International Workshop on Variability Modelling of Software-Intensive
  Systems (VAMOS)}}. \bibinfo{publisher}{ACM}, \bibinfo{pages}{28--35}.
\newblock


\bibitem[\protect\citeauthoryear{Needleman and Wunsch}{Needleman and
  Wunsch}{1970}]%
        {Needleman:1970gm}
\bibfield{author}{\bibinfo{person}{Saul~B. Needleman} {and}
  \bibinfo{person}{Christian~D. Wunsch}.} \bibinfo{year}{1970}\natexlab{}.
\newblock \showarticletitle{{A General Method Applicable to the Search for
  Similarities in the Amino Acid Sequence of Two Proteins}}.
\newblock \bibinfo{journal}{\emph{{Journal of Molecular Biology}}}
  \bibinfo{volume}{48}, \bibinfo{number}{3} (\bibinfo{year}{1970}),
  \bibinfo{pages}{443--453}.
\newblock


\bibitem[\protect\citeauthoryear{Nguyen, K{\"a}stner, and Nguyen}{Nguyen
  et~al\mbox{.}}{2014}]%
        {NKN:ICSE14}
\bibfield{author}{\bibinfo{person}{Hung~Viet Nguyen},
  \bibinfo{person}{Christian K{\"a}stner}, {and} \bibinfo{person}{Tien~N.
  Nguyen}.} \bibinfo{year}{2014}\natexlab{}.
\newblock \showarticletitle{{Exploring Variability-Aware Execution for Testing
  Plugin-Based Web Applications}}. In \bibinfo{booktitle}{\emph{Proceedings of
  the International Conference on Software Engineering (ICSE)}}.
  \bibinfo{publisher}{ACM}, \bibinfo{pages}{907--918}.
\newblock


\bibitem[\protect\citeauthoryear{Nhlabatsi, Laney, and Nuseibeh}{Nhlabatsi
  et~al\mbox{.}}{2008}]%
        {Nhlabatsi:2008wq}
\bibfield{author}{\bibinfo{person}{A Nhlabatsi}, \bibinfo{person}{R Laney},
  {and} \bibinfo{person}{B Nuseibeh}.} \bibinfo{year}{2008}\natexlab{}.
\newblock \showarticletitle{{Feature Interaction: the Security Threat from
  within Software Systems}}.
\newblock \bibinfo{journal}{\emph{Progress in Informatics}}
  (\bibinfo{year}{2008}), \bibinfo{pages}{75--89}.
\newblock


\bibitem[\protect\citeauthoryear{Nie and Leung}{Nie and Leung}{2011}]%
        {NL:CSUR11}
\bibfield{author}{\bibinfo{person}{Changhai Nie} {and} \bibinfo{person}{Hareton
  Leung}.} \bibinfo{year}{2011}\natexlab{}.
\newblock \showarticletitle{{A Survey of Combinatorial Testing}}.
\newblock \bibinfo{journal}{\emph{ACM Computing Surveys (CUSR)}}
  \bibinfo{volume}{43}, \bibinfo{number}{2} (\bibinfo{year}{2011}),
  \bibinfo{pages}{11:1--11:29}.
\newblock


\bibitem[\protect\citeauthoryear{Pohl, B\"{o}ckle, and van~der Linden}{Pohl
  et~al\mbox{.}}{2005}]%
        {PBL05}
\bibfield{author}{\bibinfo{person}{Klaus Pohl}, \bibinfo{person}{G\"{u}nter
  B\"{o}ckle}, {and} \bibinfo{person}{Frank~J. van~der Linden}.}
  \bibinfo{year}{2005}\natexlab{}.
\newblock \bibinfo{booktitle}{\emph{{Software Product Line Engineering:
  Foundations, Principles and Techniques}}}.
\newblock \bibinfo{publisher}{Springer}.
\newblock


\bibitem[\protect\citeauthoryear{Reisner, Song, Ma, Foster, and Porter}{Reisner
  et~al\mbox{.}}{2010}]%
        {RSM+:ICSE10}
\bibfield{author}{\bibinfo{person}{Elnatan Reisner}, \bibinfo{person}{Charles
  Song}, \bibinfo{person}{Kin-Keung Ma}, \bibinfo{person}{Jeffrey~S Foster},
  {and} \bibinfo{person}{Adam Porter}.} \bibinfo{year}{2010}\natexlab{}.
\newblock \showarticletitle{{Using Symbolic Evaluation to Understand Behavior
  in Configurable Software Systems}}. In \bibinfo{booktitle}{\emph{Proceedings
  of the International Conference on Software Engineering (ICSE)}}.
  \bibinfo{publisher}{ACM}, \bibinfo{pages}{445--454}.
\newblock


\bibitem[\protect\citeauthoryear{Schmitz, Algehed, Flanagan, and Russo}{Schmitz
  et~al\mbox{.}}{2018}]%
        {schmitzFacetedSecure2018}
\bibfield{author}{\bibinfo{person}{Thomas Schmitz}, \bibinfo{person}{Maximilian
  Algehed}, \bibinfo{person}{Cormac Flanagan}, {and} \bibinfo{person}{Alejandro
  Russo}.} \bibinfo{year}{2018}\natexlab{}.
\newblock \showarticletitle{{Faceted Secure Multi Execution}}. In
  \bibinfo{booktitle}{\emph{{Proceedings of the ACM Conference on Computer and
  Communications Security (CCS)}}}. \bibinfo{publisher}{ACM}.
\newblock


\bibitem[\protect\citeauthoryear{Schmitz, Rhodes, Austin, Knowles, and
  Flanagan}{Schmitz et~al\mbox{.}}{2016}]%
        {Schmitz:2016be}
\bibfield{author}{\bibinfo{person}{Thomas Schmitz}, \bibinfo{person}{Dustin
  Rhodes}, \bibinfo{person}{Thomas~H. Austin}, \bibinfo{person}{Kenneth
  Knowles}, {and} \bibinfo{person}{Cormac Flanagan}.}
  \bibinfo{year}{2016}\natexlab{}.
\newblock \showarticletitle{{Faceted Dynamic Information Flow via Control and
  Data Monads}}. In \bibinfo{booktitle}{\emph{{Proceedings of the International
  Conference on Principles of Security and Trust - Volume 9635}}}.
  \bibinfo{publisher}{Springer}, \bibinfo{pages}{3--23}.
\newblock


\bibitem[\protect\citeauthoryear{Schroeter, Lochau, and Winkelmann}{Schroeter
  et~al\mbox{.}}{2012}]%
        {SLW:12}
\bibfield{author}{\bibinfo{person}{Julia Schroeter}, \bibinfo{person}{Malte
  Lochau}, {and} \bibinfo{person}{Tim Winkelmann}.}
  \bibinfo{year}{2012}\natexlab{}.
\newblock \showarticletitle{{Multi-Perspectives on Feature Models}}. In
  \bibinfo{booktitle}{\emph{Proceedings of the International Conference on
  Model Driven Engineering Languages and Systems (MODELS)}}.
  \bibinfo{publisher}{Springer}, \bibinfo{pages}{252--268}.
\newblock


\bibitem[\protect\citeauthoryear{Sen, Necula, Gong, and Choi}{Sen
  et~al\mbox{.}}{2015}]%
        {SNG+:FSE}
\bibfield{author}{\bibinfo{person}{Koushik Sen}, \bibinfo{person}{George
  Necula}, \bibinfo{person}{Liang Gong}, {and} \bibinfo{person}{Wontae Choi}.}
  \bibinfo{year}{2015}\natexlab{}.
\newblock \showarticletitle{{MultiSE: Multi-Path Symbolic Execution Using Value
  Summaries}}. In \bibinfo{booktitle}{\emph{Proceedings of the Joint Meeting on
  Foundations of Software Engineering (ESEC/FSE)}}. \bibinfo{publisher}{ACM},
  \bibinfo{pages}{842--853}.
\newblock


\bibitem[\protect\citeauthoryear{Soares, Meinicke, Nadi, K{\"a}stner, and
  de~Almeida}{Soares et~al\mbox{.}}{2018}]%
        {SMS+:vamos18}
\bibfield{author}{\bibinfo{person}{Larissa~Rocha Soares}, \bibinfo{person}{Jens
  Meinicke}, \bibinfo{person}{Sarah Nadi}, \bibinfo{person}{Christian
  K{\"a}stner}, {and} \bibinfo{person}{Eduardo~Santana de Almeida}.}
  \bibinfo{year}{2018}\natexlab{}.
\newblock \showarticletitle{{VarXplorer: Lightweight Process for Dynamic
  Analysis of Feature Interactions}}. In \bibinfo{booktitle}{\emph{Proceedings
  of the International Workshop on Variability Modelling of Software-Intensive
  Systems (VAMOS)}}. \bibinfo{publisher}{ACM}, \bibinfo{pages}{59--66}.
\newblock


\bibitem[\protect\citeauthoryear{Su, Attariyan, and Flinn}{Su
  et~al\mbox{.}}{2007}]%
        {SAF:SOSP07}
\bibfield{author}{\bibinfo{person}{Ya-Yunn Su}, \bibinfo{person}{Mona
  Attariyan}, {and} \bibinfo{person}{Jason Flinn}.}
  \bibinfo{year}{2007}\natexlab{}.
\newblock \showarticletitle{AutoBash: Improving Configuration Management with
  Operating System Causality Analysis}. In
  \bibinfo{booktitle}{\emph{Proceedings of the ACM SIGOPS Symposium on
  Operating Systems Principles (SOSP)}}. \bibinfo{publisher}{ACM},
  \bibinfo{pages}{237--250}.
\newblock


\bibitem[\protect\citeauthoryear{Sumner, Bao, Zhang, and Prabhakar}{Sumner
  et~al\mbox{.}}{2011}]%
        {SBZ+:ICES11}
\bibfield{author}{\bibinfo{person}{William~N. Sumner}, \bibinfo{person}{Tao
  Bao}, \bibinfo{person}{Xiangyu Zhang}, {and} \bibinfo{person}{Sunil
  Prabhakar}.} \bibinfo{year}{2011}\natexlab{}.
\newblock \showarticletitle{{Coalescing Executions for Fast Uncertainty
  Analysis}}. In \bibinfo{booktitle}{\emph{Proceedings of the International
  Conference on Software Engineering (ICSE)}}. \bibinfo{publisher}{ACM},
  \bibinfo{pages}{581--590}.
\newblock


\bibitem[\protect\citeauthoryear{Sumner and Zhang}{Sumner and Zhang}{2013}]%
        {SZ:ICSE13}
\bibfield{author}{\bibinfo{person}{William~N. Sumner} {and}
  \bibinfo{person}{Xiangyu Zhang}.} \bibinfo{year}{2013}\natexlab{}.
\newblock \showarticletitle{{Comparative Causality: Explaining the Differences
  between Executions}}. In \bibinfo{booktitle}{\emph{Proceedings of the
  International Conference on Software Engineering (ICSE)}}.
  \bibinfo{publisher}{IEEE}, \bibinfo{pages}{272--281}.
\newblock


\bibitem[\protect\citeauthoryear{Th{\"u}m, Apel, K{\"a}stner, Schaefer, and
  Saake}{Th{\"u}m et~al\mbox{.}}{2014}]%
        {TAK+:CSUR14}
\bibfield{author}{\bibinfo{person}{Thomas Th{\"u}m}, \bibinfo{person}{Sven
  Apel}, \bibinfo{person}{Christian K{\"a}stner}, \bibinfo{person}{Ina
  Schaefer}, {and} \bibinfo{person}{Gunter Saake}.}
  \bibinfo{year}{2014}\natexlab{}.
\newblock \showarticletitle{{A Classification and Survey of Analysis Strategies
  for Software Product Lines}}.
\newblock \bibinfo{journal}{\emph{ACM Computing Surveys (CSUR)}}
  \bibinfo{volume}{47}, \bibinfo{number}{1} (\bibinfo{year}{2014}),
  \bibinfo{pages}{6:1--6:45}.
\newblock


\bibitem[\protect\citeauthoryear{Tucek, Xiong, and Zhou}{Tucek
  et~al\mbox{.}}{2009}]%
        {TXZ09}
\bibfield{author}{\bibinfo{person}{Joseph Tucek}, \bibinfo{person}{Weiwei
  Xiong}, {and} \bibinfo{person}{Yuanyuan Zhou}.}
  \bibinfo{year}{2009}\natexlab{}.
\newblock \showarticletitle{{Efficient Online Validation with Delta
  Execution}}. In \bibinfo{booktitle}{\emph{Proceedings of the International
  Conference on Architectural Support for Programming Languages and Operating
  Systems (ASPLOS)}}. \bibinfo{publisher}{ACM}, \bibinfo{pages}{193--204}.
\newblock


\bibitem[\protect\citeauthoryear{von Rhein, Apel, and Raimondi}{von Rhein
  et~al\mbox{.}}{2011}]%
        {RARF:JPF11}
\bibfield{author}{\bibinfo{person}{Alexander von Rhein}, \bibinfo{person}{Sven
  Apel}, {and} \bibinfo{person}{Franco Raimondi}.}
  \bibinfo{year}{2011}\natexlab{}.
\newblock \showarticletitle{{Introducing Binary Decision Diagrams in the
  Explicit-State Verification of Java Code}}. In
  \bibinfo{booktitle}{\emph{Proceedings of the Java Pathfinder Workshop}}.
\newblock


\bibitem[\protect\citeauthoryear{Walkingshaw, K{\"a}stner, Erwig, Apel, and
  Bodden}{Walkingshaw et~al\mbox{.}}{2014}]%
        {Walkingshaw:2014ch}
\bibfield{author}{\bibinfo{person}{Eric Walkingshaw},
  \bibinfo{person}{Christian K{\"a}stner}, \bibinfo{person}{Martin Erwig},
  \bibinfo{person}{Sven Apel}, {and} \bibinfo{person}{Eric Bodden}.}
  \bibinfo{year}{2014}\natexlab{}.
\newblock \showarticletitle{{Variational Data Structures: Exploring Tradeoffs
  in Computing with Variability}}. In \bibinfo{booktitle}{\emph{Proceedings of
  the ACM International Symposium on New Ideas, New Paradigms, and Reflections
  on Programming \& Software (Onward!)}}. \bibinfo{publisher}{ACM},
  \bibinfo{pages}{213--226}.
\newblock


\bibitem[\protect\citeauthoryear{Wang, Xiong, Shi, Zhang, and Hao}{Wang
  et~al\mbox{.}}{2017}]%
        {yingfei:ISSTA18}
\bibfield{author}{\bibinfo{person}{Bo Wang}, \bibinfo{person}{Yingfei Xiong},
  \bibinfo{person}{Yangqingwei Shi}, \bibinfo{person}{Lu Zhang}, {and}
  \bibinfo{person}{Dan Hao}.} \bibinfo{year}{2017}\natexlab{}.
\newblock \showarticletitle{{Faster Mutation Analysis via Equivalence Modulo
  States}}. In \bibinfo{booktitle}{\emph{Proceedings of the {ACM} {SIGSOFT}
  International Symposium on Software Testing and Analysis (ISSTA)}}.
  \bibinfo{publisher}{ACM}, \bibinfo{pages}{295--306}.
\newblock


\bibitem[\protect\citeauthoryear{Wang and Jiang}{Wang and Jiang}{1994}]%
        {Wang:1994jq}
\bibfield{author}{\bibinfo{person}{Lusheng Wang} {and} \bibinfo{person}{Tao
  Jiang}.} \bibinfo{year}{1994}\natexlab{}.
\newblock \showarticletitle{{On the Complexity of Multiple Sequence
  Alignment}}.
\newblock \bibinfo{journal}{\emph{{Journal of Computational Biology}}}
  \bibinfo{volume}{1} (\bibinfo{year}{1994}), \bibinfo{pages}{337--348}.
\newblock


\bibitem[\protect\citeauthoryear{Wong, Meinicke, and K{\"a}stner}{Wong
  et~al\mbox{.}}{2018}]%
        {wongBeyond2018}
\bibfield{author}{\bibinfo{person}{Chu-Pan Wong}, \bibinfo{person}{Jens
  Meinicke}, {and} \bibinfo{person}{Christian K{\"a}stner}.}
  \bibinfo{year}{2018}\natexlab{}.
\newblock \showarticletitle{{Beyond Testing Configurable Systems: Applying
  Variational Execution to Automatic Program Repair and Higher Order Mutation
  Testing}}. In \bibinfo{booktitle}{\emph{{Proceedings of the ACM Joint
  European Software Engineering Conference and Symposium on the Foundations of
  Software Engineering - New Ideas and Emerging Results Track
  (ESEC/FSE-NIER)}}}. \bibinfo{publisher}{ACM}.
\newblock


\bibitem[\protect\citeauthoryear{Yang, Hance, Austin, Solar-Lezama, Flanagan,
  and Chong}{Yang et~al\mbox{.}}{2016}]%
        {YHA+:PLDI16}
\bibfield{author}{\bibinfo{person}{Jean Yang}, \bibinfo{person}{Travis Hance},
  \bibinfo{person}{Thomas~H. Austin}, \bibinfo{person}{Armando Solar-Lezama},
  \bibinfo{person}{Cormac Flanagan}, {and} \bibinfo{person}{Stephen Chong}.}
  \bibinfo{year}{2016}\natexlab{}.
\newblock \showarticletitle{{Precise, Dynamic Information Flow for
  Database-backed Applications}}. In \bibinfo{booktitle}{\emph{Proceedings of
  the ACM SIGPLAN Conference on Programming Language Design and Implementation
  (PLDI)}}. \bibinfo{publisher}{ACM}, \bibinfo{pages}{631--647}.
\newblock


\bibitem[\protect\citeauthoryear{Zeller}{Zeller}{2002}]%
        {Z:FSE02}
\bibfield{author}{\bibinfo{person}{Andreas Zeller}.}
  \bibinfo{year}{2002}\natexlab{}.
\newblock \showarticletitle{{Isolating Cause-Effect Chains From Computer
  Programs}}. In \bibinfo{booktitle}{\emph{Proceedings of the ACM SIGSOFT
  Symposium on Foundations of Software Engineering (FSE)}}.
  \bibinfo{publisher}{ACM}, \bibinfo{pages}{1--10}.
\newblock


\bibitem[\protect\citeauthoryear{Zhang, Tallam, Gupta, and Gupta}{Zhang
  et~al\mbox{.}}{2007}]%
        {Zhang:2007jv}
\bibfield{author}{\bibinfo{person}{Xiangyu Zhang}, \bibinfo{person}{Sriraman
  Tallam}, \bibinfo{person}{Neelam Gupta}, {and} \bibinfo{person}{Rajiv
  Gupta}.} \bibinfo{year}{2007}\natexlab{}.
\newblock \showarticletitle{{Towards Locating Execution Omission Errors}}. In
  \bibinfo{booktitle}{\emph{Proceedings of the {ACM} {SIGPLAN} Conference on
  Programming Language Design and Implementation (PLDI)}}.
  \bibinfo{publisher}{ACM}, \bibinfo{pages}{415--424}.
\newblock


\end{thebibliography}

\appendix



\section{Proofs}

\begin{lemma*}[Disjoint Context Lemma]\label{property:disjoint}
	At any point of execution, the contexts of two different VBlocks are mutually
	exclusive. That is, $\phi(b_i) \land \phi(b_j) = \textit{False}$ for any $i \neq j$.
\end{lemma*}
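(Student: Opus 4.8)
The plan is to prove the lemma by induction on the number of context-update steps, since each $\phi(b_i)$ changes only when some VBlock is executed and its saved context $\Phi$ is propagated through Equation~\ref{eq:nojump} or Equation~\ref{eq:jump}. For the base case, the initial configuration has $\phi(b_0) = \textit{MCtx}$ and $\phi(b_i) = \textit{False}$ for every $i \neq 0$, so any meet $\phi(b_i) \land \phi(b_j)$ with $i \neq j$ has at least one \textit{False} conjunct and is therefore \textit{False}.

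For the inductive step, I would assume the invariant holds before executing a VBlock $b_i$ and show that the single update triggered by this execution preserves it. That update first saves $\Phi = \phi(b_i)$ and resets $\phi(b_i) \leftarrow \textit{False}$, then rewrites the context of at most two successor blocks. Pairs of blocks left untouched remain disjoint by the inductive hypothesis, and every pair involving $b_i$ is now trivially disjoint because $\phi(b_i) = \textit{False}$. So the only remaining work is to verify disjointness for the rewritten successors, which I would split by the kind of jump.

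For a pair consisting of a rewritten successor and some unrelated block $b_l$, the argument is routine in both jump cases: distributing the meet over the disjunction in the new context (e.g. $\phi'(b_j) = \phi(b_j) \lor \Phi$ for an unconditional jump, or with the extra $X$ / $\neg X$ guards for a conditional jump) yields a disjunction of meets, each of which is \textit{False} by the inductive hypothesis, using $\Phi = \phi(b_i)$ together with the fact that the indices involved are distinct.

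The main obstacle, and the genuinely new case, is the pair of simultaneously rewritten successors $(b_j, b_k)$ of a conditional jump. Here I would expand
\begin{equation*}
	\phi'(b_j) \land \phi'(b_k) = \bigl(\phi(b_j) \lor (X \land \Phi)\bigr) \land \bigl(\phi(b_k) \lor (\neg X \land \Phi)\bigr)
\end{equation*}
into four meets: the term $\phi(b_j) \land \phi(b_k)$ vanishes directly by the inductive hypothesis, the two cross terms vanish because each contains $\Phi = \phi(b_i)$ met with a different block's context, and the remaining term $(X \land \Phi) \land (\neg X \land \Phi)$ is \textit{False} precisely because $X \land \neg X = \textit{False}$. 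This last observation, that the split in Equation~\ref{eq:jump} partitions $\Phi$ into mutually exclusive pieces, is the crux of the lemma. The part demanding the most care is ensuring the case analysis is exhaustive over which blocks a single execution step can modify, so that no pair of contexts is overlooked.
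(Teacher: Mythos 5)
Your proposal is correct and follows essentially the same route as the paper's own proof: induction over context-update steps, with a base case where all but the entry block carry \textit{False}, and a case split on whether the executed block has one or two jump targets, culminating in the observation that $X \land \neg X = \textit{False}$ makes the two conditional successors' new contexts disjoint. Your enumeration of untouched pairs, pairs involving the reset block, and pairs involving rewritten successors is a slightly more explicit bookkeeping of the same argument; no substantive difference.
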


\begin{proof}

	We prove by induction and case analysis on the jumping targets of a given VBlock.
	In the following, we use $b$ to denote a VBlock, $\phi(b)$ to denote the
	variability context of $b$, and $\phi'(b)$ to denote the new context after context propagation.

	\paragraph{Base case} At the beginning of execution, only the entry VBlock
	has a non-false context. Thus, $\phi(b_i) \land \phi(b_j) = \textit{False}$ because at
	least $\phi(b_i)$ or $\phi(b_j)$ equals $\textit{False}$.

	\paragraph{Induction step} Suppose before execution step $k$, $\phi(b_i)
	\land\phi(b_j)
	= \textit{False}$, for any $i \neq j$. After execution of the next VBlock, say $b_l$, we
	need to update the context of $b_l$'s jumping targets.

	\begin{itemize}
		\item If $b_l$ has only one jumping target $b_m$, according to our context
			propagation, $\phi'(b_l) = \textit{False}$, $\phi'(b_m) = \phi(b_l) \lor \phi(
			b_m)$. Obviously, $\phi'(b_l)$ is
			mutually exclusive to other VBlock context. For any VBlock context,
			say $\phi(b_o)$:

			\begin{equation}
			\begin{split}
				\phi'(b_m) \land \phi(b_o) &= (\phi(b_l) \lor \phi( b_m)) \land
				\phi(b_o) \\
				&= (\phi(b_l) \land \phi( b_o)) \lor (\phi(b_m) \land \phi( b_o))
			\end{split}
			\end{equation}

			According to our induction hypothesis, we have $\phi(b_l) \land
			\phi(b_o) = \textit{False}$
			and $\phi(b_m) \land \phi(b_o) = \textit{False}$, thus induction hypothesis holds after
			execution of $b_l$.

		\item If $b_l$ has two jumping targets $b_m$ and $b_n$, splitting the execution on condition $X$, after executing
			$b_l$, we have $\phi'(b_l) = \textit{False}$, $\phi'(b_m) = \phi(b_m) \lor (X
			\land \phi( b_l))$ and $\phi'(b_n)
			= \phi(b_n) \lor (\neg X \land \phi(b_l))$. For any VBlock context,
			say $\phi(b_o)$:

			\begin{equation}
			\begin{split}
				\phi'(b_m) \land \phi(b_o) &= (\phi(b_m) \lor (X \land
				\phi(b_l))) \land \phi(b_o) \\
				&= (\phi(b_m) \land
				\phi(b_o)) \lor (X \land \phi(b_l) \land \phi( b_o))
			\end{split}
			\end{equation}

			According to our induction hypothesis, we conclude that $\phi'(b_m) \land
			\phi(b_o) = \textit{False}$.
			Similarly, we can conclude $\phi'(b_n) \land \phi(b_o) = \textit{False}$. Moreover:

			\begin{equation}
			\begin{split}
				\phi'(b_m) \land \phi'(b_n) &= (\phi(b_m) \lor (X \land
				\phi(b_l))) \land (\phi(b_n) \lor
				(\neg X \land \phi(b_l)) \\
				&= (\phi(b_m) \land \phi(b_n)) \\
				&\quad \lor (\phi(b_m) \land \neg X \land \phi(b_l)) \\
				&\quad \lor (\phi(b_n) \land X \land \phi(b_l)) \\
				&\quad \lor (X \land \phi(b_l) \land \neg X \land \phi(b_l))
			\end{split}
			\end{equation}

			Again, our induction hypothesis guarantees that $\phi'(b_m)
			\land \phi'(b_n) =
			\textit{False}$. Thus, induction hypothesis holds after execution of $b_l$.

	\end{itemize}
\end{proof}

\begin{property*}[Optimal Sharing Property]

	Given a control flow graph where each node represents a VBlock, our
	variational execution on this graph has optimal sharing if it is acyclic or
	only contains simple loops. A loop is a simple loop if it satisfies the
	following three criteria: (1) has only one loop header; (2) has only one
	exiting node; (3) has no conditional jumps among nodes in the loop.

\end{property*}

As discussed in Section~\ref{subsec:execution}, the actual variational traces
generated by our approach are influenced by the lexical order of VBlocks in the
bytecode.  To help us focus on the essential ideas of proving optimality on
control-flow graphs, we introduce one precondition to the lexical order.

\begin{precond*}
We assume that the strict transitive predecessor relation aligns with the
lexical order of VBlocks in the bytecode. That is, for any pair of VBlocks
$b_i$ and $b_j$, if $b_i$ is a strict transitive predecessor of $b_j$, $b_i$
precedes $b_j$ in the lexical order of bytecode.
\end{precond*}

We also introduce two useful lemmas.

\begin{lemma}\label{lemma:prefix}
	For any two concrete executions of the same simple loop expressed as traces of VBlocks,
	the shorter execution is a prefix of the longer execution.
\end{lemma}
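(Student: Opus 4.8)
The plan is to exploit the highly constrained structure that the three simple-loop criteria impose on concrete traces, reducing the claim to an elementary fact about repeated sequences. First I would observe that, by criterion~(3), no node strictly inside the loop ends in a conditional jump, so the only branch within the loop is the continue-versus-exit decision at the single exiting node guaranteed by criteria~(1) and~(2). Consequently every other VBlock in the loop has a unique successor, and the loop nodes form a single deterministic cycle $C = [c_0, c_1, \ldots, c_{m-1}]$ with $c_0$ the header; one pass traverses this cycle in a fixed order that does not depend on the program state or on the configuration.

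Next I would characterize the shape of an arbitrary concrete execution. Since the only point of nondeterminism is the exiting node, a concrete execution is completely determined by how many times it takes the continue branch before finally taking the exit branch. Writing $A$ for the segment of the cycle from the header up to and including the exiting node, and $B$ for one complete cyclic pass that returns to the exiting node, every concrete trace has the form $T_k = A \cdot B^{\,k-1}$, where $k \ge 1$ is the number of times the exiting node is reached (the case in which the exiting node coincides with the header is covered by the same formula with $A = [c_0]$). In particular the trace depends only on $k$, so two executions reaching the exiting node the same number of times are identical, and the length $|T_k| = |A| + (k-1)\,|B|$ is strictly increasing in $k$.

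Finally I would establish the prefix property directly. Given two concrete executions with exit counts $j \le k$, their traces are $T_j = A \cdot B^{\,j-1}$ and $T_k = A \cdot B^{\,k-1}$; because $B^{\,j-1}$ is a prefix of $B^{\,k-1}$ whenever $j-1 \le k-1$, the trace $T_j$ is a prefix of $T_k$. Since the length is monotone in $k$, the shorter execution is exactly the one with the smaller exit count, and is therefore the prefix, which is the desired conclusion.

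I expect the main obstacle to lie in the characterization step rather than in the closing algebra: the crucial point is that the two executions align VBlock-for-VBlock \emph{not} because they happen to make the same data-dependent choices, but because the absence of internal conditional jumps forces the very same fixed sequence of VBlocks on every pass, so any two executions necessarily agree until the moment the shorter one takes its exit branch. Making this precise requires care about where the exiting node sits within the cycle (it may be the header itself), but this only affects the bookkeeping of the segments $A$ and $B$ and never the prefix conclusion.
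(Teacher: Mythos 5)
Your proof is correct, but it takes a different route from the paper's. The paper argues by contradiction: it supposes the shorter trace first differs from the longer one at position $i$, notes that the common block at position $i-1$ would then need two distinct successors inside the loop, and concludes that this block must end in a conditional jump, violating criterion~(3). Your argument is instead a direct, global characterization: you show that criteria (1)--(3) force the loop nodes onto a single deterministic cycle, so every concrete trace has the closed form $T_k = A \cdot B^{\,k-1}$ and the prefix property falls out of the algebra of repeated words. Both hinge on the same underlying observation --- no internal conditional jumps means all executions follow the identical block sequence until one takes the exit branch --- but your version proves something strictly stronger (a complete classification of the possible traces, with each trace determined by its exit count), which would also make the companion Lemma on variational traces and the counting arguments in the Optimal Sharing proof more transparent. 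The price is that you owe a little more justification for the ``single deterministic cycle'' step (why the unique-successor map restricted to the loop cannot have a tail or miss a node); this follows from strong connectivity of the loop region together with the single-entry assumption, and is worth a sentence. The paper's contradiction argument avoids that global claim entirely by reasoning only about the first point of divergence.
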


\begin{proof}

	We prove by contradiction. Let us denote the shorter execution as
	$[x_{1}, x_{2}, \ldots, x_{m}]$, and the longer execution as
	$[y_{1}, y_{2}, \ldots, y_{n}]$, where each $x_i$ or $y_j$ represents a VBlock 
	in the control-flow graph and $m \leq n$. Since a simple
	loop has only one loop header and one exiting node, $x_{1}$ must be the
	same as $y_{1}$, and $x_{m}$ must be the same as $y_{n}$.

	For the shorter trace, let us assume it differs from the longer trace at
	the element $x_{i}$ (the $i-th$ element). Thus, $[x_1, x_2, \ldots, x_{i-1}]$ is the 
	same as $[y_1, y_2, \ldots, y_{i-1}]$. Since $x_i$ is different from $y_i$, there must be 
	a conditional jump at $x_{i-1}$ that jumps to either $x_i$ or $y_i$ in the control-flow graph.
	This is contradicting the simple loop criterion that there are no conditional jumps among nodes 
	in the loop.

\end{proof}

\begin{lemma}\label{lemma:vprefix}
	For any variational execution of a simple loop, the variational trace is a prefix of the 
	longest concrete execution trace it represents.
\end{lemma}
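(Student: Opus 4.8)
The plan is to exploit the rigidity of a simple loop. Because one of the three criteria forbids any conditional jump whose two targets both lie inside the loop, every concrete execution must traverse the \emph{identical} sequence of VBlocks and can differ only in how many times it goes around the back edge; the sole conditional decision is the continue-versus-exit jump at the unique exiting node. By Lemma~\ref{lemma:prefix} this forces the concrete traces to be linearly ordered by the prefix relation, so I would fix $t_{\max}$ to be the longest concrete trace and then argue that the variational execution simply \emph{unrolls} this loop, reproducing $t_{\max}$ position by position and halting exactly when the last configuration leaves the loop. Since it never overshoots, its trace is a prefix of $t_{\max}$, which is precisely the claim.

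First I would pin down an ordering fact that makes the unrolling behave. Every node of the loop is a strict transitive predecessor of the (single) exit target, but not conversely, so by the precondition the exit target is assigned a higher VBlock index than all loop blocks. Consequently the global strategy---execute the lowest-index satisfiable block, fall through to the next block by default, and only \instr{goto} back to an earlier block when its context is freshly made satisfiable---keeps re-entering the loop body as long as \emph{any} loop block has a satisfiable context, and proceeds past the loop only once the continue context becomes unsatisfiable. This is the structural ingredient that prevents a premature jump out of the loop.

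Next I would run an induction on the number of VBlocks executed, maintaining the invariant that (i) the VBlock sequence produced so far equals the length-$r$ prefix of $t_{\max}$, and (ii) the context carried into the next block is the disjunction of exactly those configurations whose concrete trace still reaches position $r$, i.e.\ the configurations ``alive'' in the loop. The base case is the header, executed under the full method context with all configurations alive. For the step I split on whether the just-executed block is the exiting node: at a non-exiting node its single unconditional successor inherits the same context through Equation~\ref{eq:nojump} (its prior context being $\textit{False}$, so the disjunction collapses), leaving the alive set unchanged and advancing one position of $t_{\max}$; at the exiting node Equation~\ref{eq:jump} routes the continue context $X \land \Phi$ to the next loop position and the exit context $\neg X \land \Phi$ outward, shrinking the alive set to exactly the configurations that continue---which are exactly those reaching position $r+1$, including the maximal one that realizes $t_{\max}$. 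Termination of the loop's contribution to the variational trace then occurs precisely when the continue context is unsatisfiable, i.e.\ when no configuration remains in the loop, and since the last configuration to leave is the one realizing $t_{\max}$, the trace ends exactly where $t_{\max}$ ends.

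The hard part will be the bookkeeping in the inductive step rather than any deep idea. Specifically, I must show that whenever a loop block is (re-)scheduled its stored context is genuinely $\textit{False}$, so that the disjunctions in Equations~\ref{eq:nojump} and~\ref{eq:jump} really do collapse to the intended ``alive configuration'' set; this is where I would lean on the Disjoint Context Lemma together with the observation that, in the straight-line loop body, a block is untouched between the moment it is set to $\textit{False}$ after execution and the moment the header's conditional jump re-targets it. The second delicate point is ruling out premature exit---guaranteeing the instruction pointer returns to the loop body before ever reaching the higher-indexed exit target---which follows from the index ordering forced by the precondition. Both points are exactly where the three simple-loop criteria (single header, single exiting node, no internal conditional branches) do the real work.
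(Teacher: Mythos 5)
Your proposal is correct, but it reaches the conclusion by a genuinely different route than the paper. The paper's proof is a compact argument by contradiction on the \emph{first position} where the variational trace and the longest concrete trace disagree: such a disagreement would require either a conditional jump with both targets inside the loop, or two simultaneously satisfiable VBlocks inside the loop (which itself presupposes a splitting point, i.e., a conditional jump); both violate the third simple-loop criterion, and the endpoints are pinned down by the single-header and single-exit criteria. You instead build the variational trace forward by induction, carrying an explicit ``alive configurations'' invariant, and you discharge the bookkeeping the paper leaves implicit: that the index ordering forced by the precondition keeps the instruction pointer inside the loop until the continue context dies, that the target contexts are $\textit{False}$ when re-scheduled so the disjunctions in Equations~\ref{eq:nojump} and~\ref{eq:jump} collapse (via the Disjoint Context Lemma), and that Lemma~\ref{lemma:prefix} linearly orders the concrete traces so the surviving path is always the one realizing $t_{\max}$. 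The shared kernel of both arguments is the observation that a simple loop admits no splitting, so exactly one loop block is live at a time; the paper's version is shorter, while yours actually proves the stronger statement that the variational trace \emph{equals} $t_{\max}$ and makes visible why premature exit cannot occur. One small caveat: your ordering fact that every loop node strictly transitively precedes the exit target tacitly assumes no path from the exit target back into the loop; this holds under the Optimal Sharing Property's global hypothesis (any such cycle would contain the exiting node's conditional jump and hence not be simple), but it is worth stating, since the lemma as phrased mentions only a single simple loop.
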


\begin{proof}

	We prove by contradiction. Let us denote the variational execution as
	$[v_{1}, v_{2}, \ldots, v_{m}]$, and the longest concrete execution as
	$[x_{1}, x_{2}, \ldots, x_{n}]$, where each $v_i$ or $x_j$ represents a
	VBlock in the control-flow graph. Elements of a variational trace use
	superscripts to indicate variability contexts of execution, but they are
	less important in this proof so we omit them for brevity. Since a simple
	loop has only one loop header and one exiting node, $v_{1}$ must be the
	same as $x_{1}$, and $v_{m}$ must be the same as $x_{n}$.

	Let us assume the variational trace differs from the longest trace at the
	element $v_{i}$ (the $i-th$ element). Thus, $[v_1, v_2, \ldots, v_{i-1}]$
	is the same as $[x_1, x_2, \ldots, x_{i-1}]$. Since $v_i$ is different from
	$x_i$, there could be two causes. First, there is a conditional jump at
	$x_{i-1}$ that jumps to either $v_i$ or $x_i$ in the control-flow graph.
	Second, during variational execution of the loop, two different VBlocks have
	satisfiable contexts, which also requires at least a conditional jump among
	VBlocks in the loop because conditional jumps are the only places where we
	split variability contexts.  Both of these cases contradict the simple loop
	criterion that there are no conditional jumps among nodes in the loop.

\end{proof}


With the precondition and lemmas above, we will prove the original property
below. Again, we prove that, given a control flow graph of VBlocks, our
variational execution on this graph has optimal sharing if it is acyclic or
only contains simple loops.

\begin{proof}

	We prove by case analysis on acyclic control-flow graphs and control-flow
	graphs with simple loops, respectively.

	\paragraph{Acyclic} For any acyclic control-flow graph, suppose our variational execution generates a trace $t_v$
	with $n$ elements. Our static partial ordering between VBlocks ensures that
	these $n$ elements are different. Otherwise, suppose $b_i$ appears twice in
	$t_v$, there must be a transitive predecessor of $b_i$ between these two appearances of
	$b_i$ in $t_v$ because the control-flow graph is acyclic. However, this is impossible
	because $b_i$'s transitive predecessors can only precede $b_i$ in our
	variational traces, due to our static partial ordering. 
	
	As discussed in Section~\ref{subsec:properties}, $t_v$ represents a set of concrete execution traces under
	different restricted contexts. These concrete traces have the following two
	properties:
	
	\begin{itemize}

		\item There is no duplicated VBlock in each concrete trace, because the
			control-flow graph is acyclic.
			
		\item The $n$ different VBlocks in $t_v$ must appear in one or more of these
			concrete traces, because our variational execution only executes
			VBlocks with satisfiable contexts.  
	\end{itemize}
	
	We denote the optimal sharing of these concrete traces as $t_o$. From these
	two properties, we know that $length(t_o) = n$ because each VBlock must
	occur at least once, and at most once if the traces are optimally aligned.
	So, the length of the optimal alignment must be $n$.  Since $length(t_v)$ is
	also $n$, we achieve optimal sharing.

	\paragraph{Simple Loop} For any control-flow graph with one or more simple
	loops, we denote a loop as $L_i$, with the subscript distinguishing
	different loops. Suppose our variational execution generates a trace $t_v$.
	Our static partial ordering guarantees that $t_v$ has the following
	properties:

	\begin{itemize}

		\item If a loop $L_i$ is executed, VBlocks belonging to $L_i$
			are adjacent to each other in $t_v$, without any VBlock that does
			not belong to $L_i$ in between. We call this region a looping
			region of $L_i$, denoted as $RV_i$. This can be proven by
			contradiction. If there is a VBlock $b$ (not belonging to $L_i$)
			inside $RV_i$, between $b_x$ and $b_y$ (both $b_x$ and $b_y$ are
			part of the loop $L_i$), $b$ must have the same transitive
			predecessor relation with $b_x$ and $b_y$, because $b$ is not part
			of the loop $L_i$. If this is the case, our static partial ordering
			would require $b$ to either precede both $b_x$ and $b_y$ or fall
			behind $b_x$ and $b_y$ in the trace. This is contracting to the
			assumption that $b$ is between $b_x$ and $b_y$ in the trace $t_v$.

		\item In $t_v$, any VBlock $b$ that is outside looping regions have no
			duplication. This can also be proven by contradiction. Suppose $b$
			(not belonging to any loops) appears twice in $t_v$, there must be
			a transitive predecessor of $b$ between these two appearances of
			$b$ in $t_v$ because $b$ does not belong to any loops. However,
			$b$'s transitive predecessor cannot appear between two occurrences
			of $b$ in $t_v$, due to our static partial ordering.

		\item For any loop $L_i$, there is at most one looping region $RV_i$ in
			$t_v$. Otherwise, $L_i$ must be a inner loop of another bigger
			loop. If this is the case, there must be at least one conditional
			jump in the outer loop, and therefore the outer loop fails to
			satisfy the simple loop premise.

	\end{itemize}

	Based on these properties, we have $length(t_v) =
	\sum\limits_{i}^{}length(RV_i) + n$, where $length(RV_i)$ denotes the number
	of elements in $RV_i$, and $n$ denotes the number of VBlocks in $t_v$ that
	are not part of any loops.

	Now if we consider the concrete traces represented by $t_v$, in order to
	produce the optimal merging of these traces $t_o$, we need to take two
	steps: (1) merge looping regions of concrete traces and (2) merge VBlocks
	that do not belong to any loop.

	\begin{enumerate}

		\item From Lemma~\ref{lemma:prefix}, we know that the length of merging all
			looping regions of $L_i$ across concrete traces is determined by
			the longest looping region, which we denote as $length(RMax_i)$.

		\item Merging VBlocks that do not belong to any loop would result in $n$
			elements. This is equivalent to merging concrete traces of acyclic
			control-flow graphs, which we have already proven in the first half of this proof.

	\end{enumerate}

	Thus, we have $length(t_o) = \sum\limits_{i}^{}length(RMax_i) + n$. For any
	loop $L_i$, the length of its looping region $RV_i$ in $t_v$ (if exists) is
	bounded by $length(RMax_i)$, (i.e., $length(RV_i) \leq length(RMax_i)$), as
	we have proven in Lemma~\ref{lemma:vprefix}. On the other hand, $RV_i$ is
	guaranteed to represent the longest looping of $L_i$ in concrete traces
	because the context with the longest looping must be executed to satisfy
	correctness. So, $length(RV_i) \geq length(RMax_i)$, which gives us
	$length(RV_i) = length(RMax_i)$. Since $length(t_v) = length(t_o)$, we
	achieve optimal sharing.

\end{proof}

\section{Example of Transformed Bytecode}
As a concrete example of our bytecode transformation, we show the original and
transformed bytecode of the \texttt{getWeather()} method in
Figure~\ref{fig:running}, which covers multiple kinds of bytecode instructions.

\begin{lstlisting}[language = JVMIS, 
					tabsize = 2,
					escapechar = ~, 
					basicstyle = \footnotesize\ttfamily\bfseries, 
					breaklines = true
					mathescape = true,
					numbers = none,
					title = Original bytecode for the getWeather() method shown in Figure~1,
					name = unlifted,
					label = {lst:unlifted},
					frame            = tb,    % draw frame at top and bottom of code block
					framesep         = 3pt,   % expand outward
					framerule        = 0.4pt, % expand outward 
					commentstyle     = \color{Green},      % comment color
					keywordstyle     = \color{blue},       % keyword color
					% stringstyle      = \color{DarkRed},    % string color
					% backgroundcolor  = \color{WhiteSmoke}, % backgroundcolor color
					showstringspaces = false              % do not mark spaces in strings
					]
  public static java.lang.String getWeather();
    Code:
       0: invokestatic  #19                 // Method getCelsius:()F
       3: fstore_0
       4: getstatic     #20                 // Field FAHRENHEIT:Z
       7: ifeq          36
      10: new           #15                 // class java/lang/StringBuilder
      13: dup
      14: invokespecial #16                 // Method java/lang/StringBuilder."<init>":()V
      17: fload_0
      18: ldc           #21                 // float 1.8f
      20: fmul
      21: ldc           #22                 // float 32.0f
      23: fadd
      24: invokevirtual #23                 // Method java/lang/StringBuilder.append:(F)Ljava/lang/StringBuilder;
      27: ldc           #24                 // String F
      29: invokevirtual #17                 // Method java/lang/StringBuilder.append:(Ljava/lang/String;)Ljava/lang/StringBuilder;
      32: invokevirtual #18                 // Method java/lang/StringBuilder.toString:()Ljava/lang/String;
      35: areturn
      36: new           #15                 // class java/lang/StringBuilder
      39: dup
      40: invokespecial #16                 // Method java/lang/StringBuilder."<init>":()V
      43: fload_0
      44: invokevirtual #23                 // Method java/lang/StringBuilder.append:(F)Ljava/lang/StringBuilder;
      47: ldc           #25                 // String C
      49: invokevirtual #17                 // Method java/lang/StringBuilder.append:(Ljava/lang/String;)Ljava/lang/StringBuilder;
      52: invokevirtual #18                 // Method java/lang/StringBuilder.toString:()Ljava/lang/String;
      55: areturn
\end{lstlisting}

\begin{lstlisting}[language = JVMIS, 
					tabsize = 2,
					escapechar = ~, 
					basicstyle = \footnotesize\ttfamily\bfseries, 
					breaklines = true
					mathescape = true,
					numbers = none,
					title = Transformed bytecode for the getWeather() method shown in Figure~1,
					name = lifted,
					label = {lst:lifted},
					frame            = tb,    % draw frame at top and bottom of code block
					framesep         = 3pt,   % expand outward
					framerule        = 0.4pt, % expand outward 
					commentstyle     = \color{Green},      % comment color
					keywordstyle     = \color{blue},       % keyword color
					% stringstyle      = \color{DarkRed},    % string color
					% backgroundcolor  = \color{WhiteSmoke}, % backgroundcolor color
					showstringspaces = false              % do not mark spaces in strings
					]
  public static edu.cmu.cs.varex.V<java.lang.String> getWeather____Ljava_lang_String(de.fosd.typechef.featureexpr.FeatureExpr);
    Code:
       0: invokestatic  #18                 // Method de/fosd/typechef/featureexpr/FeatureExprFactory.False:()Lde/fosd/typechef/featureexpr/FeatureExpr;
       3: astore        7
       5: invokestatic  #18                 // Method de/fosd/typechef/featureexpr/FeatureExprFactory.False:()Lde/fosd/typechef/featureexpr/FeatureExpr;
       8: astore        6
      10: invokestatic  #18                 // Method de/fosd/typechef/featureexpr/FeatureExprFactory.False:()Lde/fosd/typechef/featureexpr/FeatureExpr;
      13: astore        5
      15: invokestatic  #18                 // Method de/fosd/typechef/featureexpr/FeatureExprFactory.False:()Lde/fosd/typechef/featureexpr/FeatureExpr;
      18: astore        4
      20: invokestatic  #24                 // Method edu/cmu/cs/varex/One.getOneNull:()Ledu/cmu/cs/varex/V;
      23: astore_3
      24: invokestatic  #24                 // Method edu/cmu/cs/varex/One.getOneNull:()Ledu/cmu/cs/varex/V;
      27: astore_1
      28: invokestatic  #24                 // Method edu/cmu/cs/varex/One.getOneNull:()Ledu/cmu/cs/varex/V;
      31: astore_2
      32: aload_0
      33: astore        4
      35: aload         4
      37: invokestatic  #180                // Method getCelsius____F:(Lde/fosd/typechef/featureexpr/FeatureExpr;)Ledu/cmu/cs/varex/V;
      40: aload         4
      42: swap
      43: aload_1
      44: invokestatic  #59                 // InterfaceMethod edu/cmu/cs/varex/V.choice:(Lde/fosd/typechef/featureexpr/FeatureExpr;Ledu/cmu/cs/varex/V;Ledu/cmu/cs/varex/V;)Ledu/cmu/cs/varex/V;
      47: astore_1
      48: getstatic     #182                // Field FAHRENHEIT:Ledu/cmu/cs/varex/V;
      51: invokestatic  #70                 // Method edu/cmu/cs/varex/VOps.whenEQ:(Ledu/cmu/cs/varex/V;)Lde/fosd/typechef/featureexpr/FeatureExpr;
      54: dup
      55: invokestatic  #76                 // Method edu/cmu/cs/varex/VCache.not:(Lde/fosd/typechef/featureexpr/FeatureExpr;)Lde/fosd/typechef/featureexpr/FeatureExpr;
      58: aload         4
      60: invokestatic  #80                 // Method edu/cmu/cs/varex/VCache.and:(Lde/fosd/typechef/featureexpr/FeatureExpr;Lde/fosd/typechef/featureexpr/FeatureExpr;)Lde/fosd/typechef/featureexpr/FeatureExpr;
      63: aload         5
      65: invokeinterface #85,  2           // InterfaceMethod de/fosd/typechef/featureexpr/FeatureExpr.or:(Lde/fosd/typechef/featureexpr/FeatureExpr;)Lde/fosd/typechef/featureexpr/FeatureExpr;
      70: astore        5
      72: aload         4
      74: invokestatic  #80                 // Method edu/cmu/cs/varex/VCache.and:(Lde/fosd/typechef/featureexpr/FeatureExpr;Lde/fosd/typechef/featureexpr/FeatureExpr;)Lde/fosd/typechef/featureexpr/FeatureExpr;
      77: aload         6
      79: invokeinterface #85,  2           // InterfaceMethod de/fosd/typechef/featureexpr/FeatureExpr.or:(Lde/fosd/typechef/featureexpr/FeatureExpr;)Lde/fosd/typechef/featureexpr/FeatureExpr;
      84: astore        6
      86: invokestatic  #18                 // Method de/fosd/typechef/featureexpr/FeatureExprFactory.False:()Lde/fosd/typechef/featureexpr/FeatureExpr;
      89: astore        4
      91: aload         5
      93: invokestatic  #89                 // Method edu/cmu/cs/varex/VCache.isContradiction:(Lde/fosd/typechef/featureexpr/FeatureExpr;)Z
      96: ifne          205
      99: new           #138                // class model/java/lang/StringBuilder
     102: dup
     103: aload         5
     105: invokespecial #140                // Method model/java/lang/StringBuilder."<init>":(Lde/fosd/typechef/featureexpr/FeatureExpr;)V
     108: aload_1
     109: ldc           #183                // float 1.8f
     111: invokestatic  #189                // Method java/lang/Float.valueOf:(F)Ljava/lang/Float;
     114: aload         5
     116: swap
     117: invokestatic  #95                 // InterfaceMethod edu/cmu/cs/varex/V.one:(Lde/fosd/typechef/featureexpr/FeatureExpr;Ljava/lang/Object;)Ledu/cmu/cs/varex/V;
     120: aload         5
     122: invokestatic  #193                // Method edu/cmu/cs/varex/VOps.fmul:(Ledu/cmu/cs/varex/V;Ledu/cmu/cs/varex/V;Lde/fosd/typechef/featureexpr/FeatureExpr;)Ledu/cmu/cs/varex/V;
     125: ldc           #194                // float 32.0f
     127: invokestatic  #189                // Method java/lang/Float.valueOf:(F)Ljava/lang/Float;
     130: aload         5
     132: swap
     133: invokestatic  #95                 // InterfaceMethod edu/cmu/cs/varex/V.one:(Lde/fosd/typechef/featureexpr/FeatureExpr;Ljava/lang/Object;)Ledu/cmu/cs/varex/V;
     136: aload         5
     138: invokestatic  #197                // Method edu/cmu/cs/varex/VOps.fadd:(Ledu/cmu/cs/varex/V;Ledu/cmu/cs/varex/V;Lde/fosd/typechef/featureexpr/FeatureExpr;)Ledu/cmu/cs/varex/V;
     141: aload         5
     143: invokevirtual #200                // Method model/java/lang/StringBuilder.append__F__Lmodel_java_lang_StringBuilder:(Ledu/cmu/cs/varex/V;Lde/fosd/typechef/featureexpr/FeatureExpr;)Ledu/cmu/cs/varex/V;
     146: ldc           #202                // String F
     148: aload         5
     150: swap
     151: invokestatic  #95                 // InterfaceMethod edu/cmu/cs/varex/V.one:(Lde/fosd/typechef/featureexpr/FeatureExpr;Ljava/lang/Object;)Ledu/cmu/cs/varex/V;
     154: invokedynamic #207,  0            // InvokeDynamic #5:apply:(Ledu/cmu/cs/varex/V;)Ljava/util/function/BiFunction;
     159: aload         5
     161: invokeinterface #121,  3          // InterfaceMethod edu/cmu/cs/varex/V.sflatMap:(Ljava/util/function/BiFunction;Lde/fosd/typechef/featureexpr/FeatureExpr;)Ledu/cmu/cs/varex/V;
     166: invokedynamic #212,  0            // InvokeDynamic #6:apply:()Ljava/util/function/BiFunction;
     171: aload         5
     173: invokeinterface #121,  3          // InterfaceMethod edu/cmu/cs/varex/V.sflatMap:(Ljava/util/function/BiFunction;Lde/fosd/typechef/featureexpr/FeatureExpr;)Ledu/cmu/cs/varex/V;
     178: aload         5
     180: swap
     181: aload_2
     182: invokestatic  #59                 // InterfaceMethod edu/cmu/cs/varex/V.choice:(Lde/fosd/typechef/featureexpr/FeatureExpr;Ledu/cmu/cs/varex/V;Ledu/cmu/cs/varex/V;)Ledu/cmu/cs/varex/V;
     185: astore_2
     186: aload         5
     188: aload         7
     190: invokeinterface #85,  2           // InterfaceMethod de/fosd/typechef/featureexpr/FeatureExpr.or:(Lde/fosd/typechef/featureexpr/FeatureExpr;)Lde/fosd/typechef/featureexpr/FeatureExpr;
     195: astore        7
     197: invokestatic  #18                 // Method de/fosd/typechef/featureexpr/FeatureExprFactory.False:()Lde/fosd/typechef/featureexpr/FeatureExpr;
     200: astore        5
     202: goto          205
     205: aload         6
     207: invokestatic  #89                 // Method edu/cmu/cs/varex/VCache.isContradiction:(Lde/fosd/typechef/featureexpr/FeatureExpr;)Z
     210: ifne          287
     213: new           #138                // class model/java/lang/StringBuilder
     216: dup
     217: aload         6
     219: invokespecial #140                // Method model/java/lang/StringBuilder."<init>":(Lde/fosd/typechef/featureexpr/FeatureExpr;)V
     222: aload_1
     223: aload         6
     225: invokevirtual #200                // Method model/java/lang/StringBuilder.append__F__Lmodel_java_lang_StringBuilder:(Ledu/cmu/cs/varex/V;Lde/fosd/typechef/featureexpr/FeatureExpr;)Ledu/cmu/cs/varex/V;
     228: ldc           #214                // String C
     230: aload         6
     232: swap
     233: invokestatic  #95                 // InterfaceMethod edu/cmu/cs/varex/V.one:(Lde/fosd/typechef/featureexpr/FeatureExpr;Ljava/lang/Object;)Ledu/cmu/cs/varex/V;
     236: invokedynamic #219,  0            // InvokeDynamic #7:apply:(Ledu/cmu/cs/varex/V;)Ljava/util/function/BiFunction;
     241: aload         6
     243: invokeinterface #121,  3          // InterfaceMethod edu/cmu/cs/varex/V.sflatMap:(Ljava/util/function/BiFunction;Lde/fosd/typechef/featureexpr/FeatureExpr;)Ledu/cmu/cs/varex/V;
     248: invokedynamic #224,  0            // InvokeDynamic #8:apply:()Ljava/util/function/BiFunction;
     253: aload         6
     255: invokeinterface #121,  3          // InterfaceMethod edu/cmu/cs/varex/V.sflatMap:(Ljava/util/function/BiFunction;Lde/fosd/typechef/featureexpr/FeatureExpr;)Ledu/cmu/cs/varex/V;
     260: aload         6
     262: swap
     263: aload_2
     264: invokestatic  #59                 // InterfaceMethod edu/cmu/cs/varex/V.choice:(Lde/fosd/typechef/featureexpr/FeatureExpr;Ledu/cmu/cs/varex/V;Ledu/cmu/cs/varex/V;)Ledu/cmu/cs/varex/V;
     267: astore_2
     268: aload         6
     270: aload         7
     272: invokeinterface #85,  2           // InterfaceMethod de/fosd/typechef/featureexpr/FeatureExpr.or:(Lde/fosd/typechef/featureexpr/FeatureExpr;)Lde/fosd/typechef/featureexpr/FeatureExpr;
     277: astore        7
     279: invokestatic  #18                 // Method de/fosd/typechef/featureexpr/FeatureExprFactory.False:()Lde/fosd/typechef/featureexpr/FeatureExpr;
     282: astore        6
     284: goto          287
     287: aload_2
     288: aload         7
     290: invokestatic  #168                // Method edu/cmu/cs/varex/VOps.verifyAndThrowException:(Ledu/cmu/cs/varex/V;Lde/fosd/typechef/featureexpr/FeatureExpr;)Ledu/cmu/cs/varex/V;
     293: areturn
\end{lstlisting}

\end{document}